\newtheorem{theorem}{Theorem}[section]
\newtheorem{lemma}{Lemma}[section]
\newtheorem{definition}[theorem]{Definition}
\begin{document}

\preprint{APS/123-QED}

\title{Cascading Edge Failures: A Dynamic Network Process}% Force line breaks with \\
\thanks{This work is partially supported by NSF grants CCF1011903 and CCF1513936.}%

\author{June Zhang}
 
\author{Jos\'{e} M.F.~Moura}%
 %\email{Second.Author@institution.edu}
\affiliation{%
 Department of Electrical and Computer Engineering\\
 Carnegie Mellon University\\
 Pittsburgh, PA, USA\\
}%

\date{\today}% It is always \today, today,
             %  but any date may be explicitly specified

\begin{abstract}
This paper considers the dynamics of edges in a network. The Dynamic Bond Percolation (DBP) process models, through stochastic local rules, the dependence of an edge $(a,b)$ in a network on the states of its neighboring edges. Unlike previous models, DBP does not assume statistical independence between different edges. In applications, this means for example that failures of transmission lines in a power grid are not statistically independent, or alternatively, relationships between individuals (dyads) can lead to changes in other dyads in a social network. We consider the time evolution of the probability distribution of the network state, the collective states of all the edges (bonds), and show that it converges to a stationary distribution. We use this distribution to study the emergence of global behaviors like consensus (i.e., catastrophic failure or full recovery of the entire grid) or coexistence (i.e., some failed and some operating substructures in the grid). In particular, we show that, depending on the local dynamical rule, different network substructures, such as hub or triangle subgraphs, are more prone to failure. 
\begin{description}
%\item[Usage]
%Secondary publications and information retrieval purposes.
\item[PACS numbers]
\pacs{64.60.aq, 89.75.Fb, 02.50.-r, 89.65.-s}
%\item[Structure]
%You may use the \texttt{description} environment to structure your abstract;
%use the optional argument of the \verb+\item+ command to give the category of each item. 
\end{description}
\end{abstract}

\pacs{64.60.aq, 89.75.Fb, 02.50.-r, 89.65.-s}% PACS, the Physics and Astronomy
                             % Classification Scheme.
%\keywords{Suggested keywords}%Use showkeys class option if keyword
                              %display desired
\maketitle

%\tableofcontents
\section{Introduction}
In 2003, the failure of a single power line triggered a series of cascading failures throughout the Northeastern United States and Southeastern Canada \cite{minkel20082003}. Connectivity of the power grid led to an uncontrollable propagation of failures that resulted in over 50 millions customers losing power. Similarly, the rapid spread of infection during epidemics is due to contacts between individuals of a population. These phenomena are rare and difficult to study via experiments. Due to their scale, they are also expensive to simulate. It is therefore advantageous to be able to study such phenomena analytically. 

Cascading failures and epidemics can be modeled as \emph{network processes} \textemdash dynamical processes whose substrate is a particular network structure. It is of interest to understand the dynamics of such processes, particularly the impact of the network structure \cite{Jackson, newman2010networks}. In some applications such as epidemiology, where nodes represent individuals, it is more intuitive to study network processes on nodes \cite{eames2002modeling, JZhangJournal, JZhangJournal2}. This means that the state of each node changes according to deterministic or stochastic dynamics often depending on the states of the nodes' neighbors. This inclusion couples the evolution of each node with that of all the other nodes in the network, resulting in a complex dynamical system. For modeling phenomena such as cascading failures of transmission lines in the power grid or the evolution of relationships, called dyads, in social networks, it is more appropriate to study edge-centric network processes where the edges change state according to some dynamics.

Dynamic edge models have been used in computational sociology to study the time evolution of dyads \cite{wasserman1980analyzing, leenders1995models, snijders2005models, pepyne2007topology}. However, some models assume that the dynamics of different dyads are statistically independent. These models can not capture effects such as triadic closure, where friendships are more likely to form between individuals with common neighbors \cite{granovetter1973strength}. Alternatively, models try to capture interdependencies by incorporating knowledge of network structure into the dynamics of a dyad. This assumes that individual dyads have full knowledge of the overall network structure, which may not be feasible in practice. 

The bond percolation model from statistical mechanics has been used to study the robustness and resilience of network structures to stochastic bond (i.e., edge) removals \cite{broadbent1957percolation, newman2010networks, callaway2000network}. However, the standard percolation model is \emph{not} a network process, as it does not model dynamical evolution of bond states over time \cite{steif2009survey}. Additionally, the bond percolation model assumes that the size of the network is infinite, and account for the topology only through the degree distribution and not higher order degree correlations. Lastly, percolation is a \emph{macroscopic} model; it characterizes global properties such as the percentage of removed components, but can not provide information on microscopic properties such as probability that a set of bonds is removed. 

The Dynamic Bond Percolation (DBP) model we present in this paper differs from previous models by the inclusion of local coupling dependencies in time between edges as represented by a heterogenous network structure. Each edge in a network can be in one of two states: open and closed. The dynamics of edges are no longer independent as an edge $(a,b)$ changes state according to a stochastic rule depending only on the states of its local neighboring edges. For example, this means that the failure rate of a transmission line is affected by the failures of other transmission lines. Alternatively, the formation of friendship links between individuals $a$ and $b$ depends on the other relationships of $a$ and $b$. 

%DBP can be used to model cascading failures, with recovery, of transmission lines in the power grid or stochastic formation and dissipation of dyads in social networks.

The state of DBP at time $t$ is the collection of the states of all the edges at time $t$. We show that the probability distribution over the states of DBP reaches an equilibrium distribution. For certain local rules, we can compute this distribution explicitly for any finite-size, unweighted, undirected network structure, unlike previous models that approximate the underlying network by simpler structures (e.g., complete graphs) or infinite-size networks (i.e., mean-field approximation) \cite{liggett1999stochastic, barrat2008dynamical}. Analysis of the equilibrium distribution informs us of the emergence of global behavior. For example, in certain parameter regimes, the most probable network state at equilibrium is that of consensus, denoting either complete failure or complete recovery of all transmission lines. In other parameter regimes, it provides information about the subset of transmission lines more prone to cascading failures or more likely to form relationships. We will see that these effects depend on the local dynamical rules. 

When the imbalance between node $a$ and node $b$ does not matter, the critical edges tend to belong to star subgraphs (i.e., hub structures). Whereas critical edges tend to belong to triangle subgraphs and other network structures (e.g., $P_3$ or $P_4$, see Section~\ref{sec:graphreview}) when imbalance or mutual relationships matter. We illustrate our analysis within two real-world networks: 118-node IEEE test bus power grid \cite{power:Online} and 198-node social network  \cite{weeks2002social}.

Section~\ref{sec:model} explains the Dynamic Bond Percolation process in detail. Section~\ref{sec:equilibriumdistribution} derives the closed-form equilibrium distribution for 3 different dynamical rules. Section~\ref{sec:mostprobable} describes the Most-Probable Network Problem. Section~\ref{sec:regimeI} through Section~\ref{sec:regimeIIIedge} analyzes the solution space of the Most-Probable Network Problem for the different parameter regimes. Section~\ref{sec:conclusion} concludes the paper.

%%%%%%%%%%%%%%%

\section{Dynamic Bond Percolation Process}\label{sec:model}

%Let the state of the process at time $t$, $X(t), t \ge 0 =  \mathbf{A} =[X_{ab}]$. Therefore $\mathbf{A}$ is the $N \times N$ adjacency matrix that describes the network induced by the open edges in $E_{\max}$.

Consider a population of $N$ individuals or components represented as nodes in a network. The adjacency matrix, $\mathbf{A}_{\max}$, describes the largest set of potentially open or closed bonds between the nodes. We will refer to the network represented by $\mathbf{A}_{\max}$ the \emph{maximal network}. For example, the maximal network may be a representation of the underlying power grid or a social network between $N$ individuals. We assume that the maximal network is a simple, connected, undirected graph. Let $E_{\max}$ denote the set of edges and $V_{\max}$ denote the set of nodes in the maximal network respectively. Figure~\ref{fig:rnetwork} shows an example of a maximal network. 

Edges in the maximal network may change state over time. For instance, transmission lines in the power grid can fail, social ties between individuals can weaken or strengthen with time. We are particularly interested in scenarios where there is a contagion aspect to the dynamical process: a single transmission line failure can lead to other line failures such as in blackouts. 

\begin{figure}
\center
\includegraphics[width=2in]{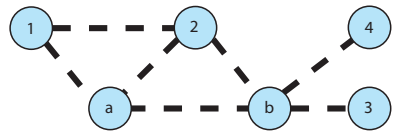}
\caption{Maximal network represented by $\mathbf{A}_{\max}$. Dashed edges (i.e., bonds) are the only possible edges in the network.}
\label{fig:rnetwork}
\end{figure}

In this paper, we present the Dynamic Bond Percolation (DBP) process, a dynamical extension of the bond percolation model \cite{broadbent1957percolation}. Edges in the maximal network \emph{open} or \emph{close} according to stochastic dynamical rules specified by DBP. We assume that the close of an edge depends on the state of the neighboring edges, thereby coupling the underlying network topology with the process dynamics. In applications, DBP can be used to model cascading failures of transmission lines in the power grid or the formation and dissolution of ties in a social network. Figure~\ref{fig:evolutionx(t)} shows a possible realization of the DBP process on the maximal network shown in Figure~\ref{fig:rnetwork}.

The state of the DBP process at time $t \ge 0 $, $A(t)$, is represented by the $N \times N$ adjacency matrix $ \mathbf{A}$, where
\begin{align*}
\mathbf{A}_{i,j} = & 1 \text{ if edge $(a,b)$ is close}\\
                       = & 0  \text{ if edge $(a,b)$ is open}.
\end{align*}

We will call $\mathbf{A}$ the \emph{network state}. The set of nodes in the network state $V(\mathbf{A}) = V_{\max}$. The set of edges in the network state corresponds to the set of closed edges in the maximal network; therefore, $E(\mathbf{A}) \subseteq E_{\max}$. The space of all possible network states is $\mathcal{A}$. Since each edge in the maximal network can be either open or closed, then $|\mathcal{A}| = 2^{|E_{\max}|}$. DBP makes the following assumptions:

\textbf{Assumption 1:} Network states that contain edges not in $E_{\max}$ are invalid. For example, with respect to the maximal network in Figure~\ref{fig:rnetwork}, the network state in Figure~\ref{fig:invalidnetwork} is not valid since edge $(2,4)$ is not in the maximal network.

\begin{figure}
\center
\includegraphics[width=2in]{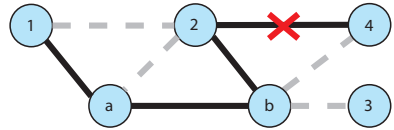}
\caption{Invalid network state. Edge (2,4) is not in the maximal network and can not change state. Solid edges are \emph{closed}. Dashed edges are \emph{open}.}
\label{fig:invalidnetwork}
\end{figure}

\begin{figure*}
	\centering
        \begin{subfigure}[b]{0.22\textwidth}
                \centering
                \includegraphics[width =\textwidth]{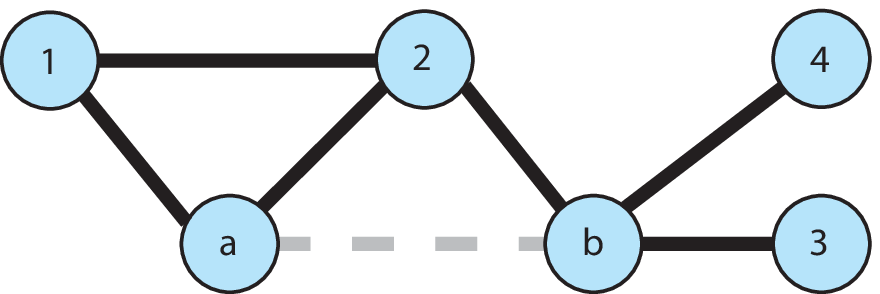}
                \caption{$X(t_1)$}
                \label{fig:x1}
        \end{subfigure}%
        ~ %add desired spacing between images, e. g. ~, \quad, \qquad etc.
          %(or a blank line to force the subfigure onto a new line)
        \begin{subfigure}[b]{0.22\textwidth}
                \centering
                \includegraphics[width =\textwidth]{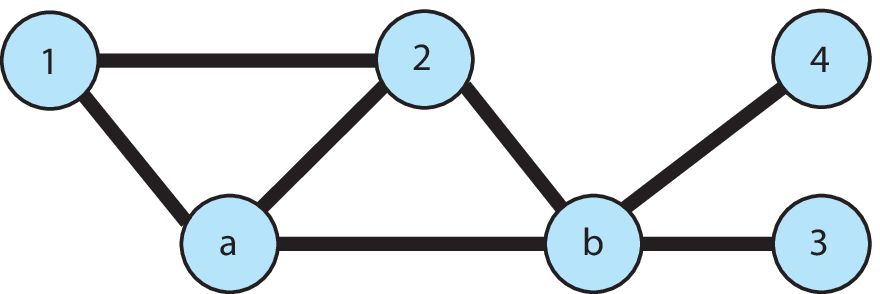}
                \caption{$X(t_2)$}
                \label{fig:x2}
        \end{subfigure}
        ~ %add desired spacing between images, e. g. ~, \quad, \qquad etc.
          %(or a blank line to force the subfigure onto a new line)
           \begin{subfigure}[b]{0.22\textwidth}
                \centering
                \includegraphics[width =\textwidth]{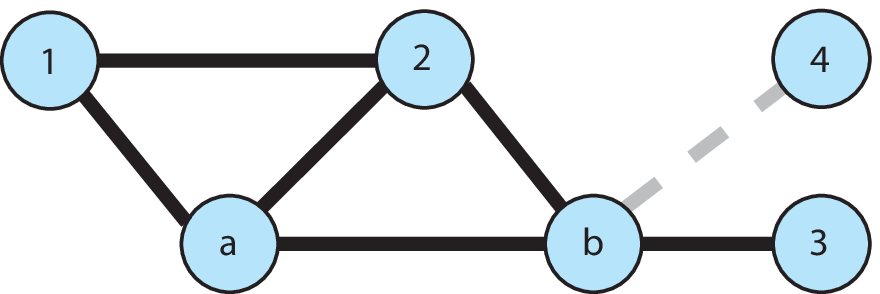}
                \caption{$X(t_3)$}
                \label{fig:x3}
        \end{subfigure}
        ~ %add desired spacing between images, e. g. ~, \quad, \qquad etc.
          %(or a blank line to force the subfigure onto a new line)
            \begin{subfigure}[b]{0.22\textwidth}
                \centering
                \includegraphics[width =\textwidth]{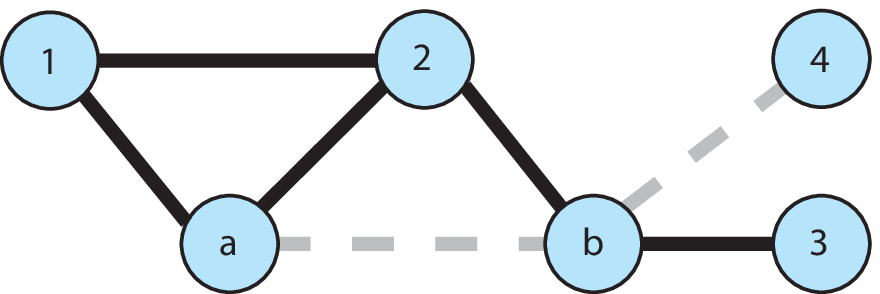}
                \caption{$X(t_4)$}
                \label{fig:x4}
        \end{subfigure}
        ~ %add desired spacing between images, e. g. ~, \quad, \qquad etc.
          %(or a blank line to force the subfigure onto a new line)       
        \caption{Example evolution of $A(t)$. Solid edges are \emph{closed}. Dashed edges are \emph{open}.}\label{fig:evolutionx(t)}
\end{figure*}

%We model the stochastic evolution of $X(t)$ as an SIS (susceptible-infected-susceptible) epidemics \cite{Jackson}. 

\textbf{Assumption 2:} Multiple edge opening or closure can not occur simultaneously. Only one edge changes state per transition.

\textbf{Assumption 3:} The time it takes for an edge $(a,b)$ to go from \emph{close} to \emph{open} (e.g., $t_4 - t_3$ as in Figure~\ref{fig:evolutionx(t)}) is exponentially distributed with rate
\begin{equation}\label{eq:dissipationrate}
q(\mathbf{A}, T_{a,b}^-\mathbf{A}) = \mu'.
\end{equation}

This means that the probability that edge $(a,b)$ switches from \emph{close} to \emph{open} $\Delta t$ time units in the future  is
\[
P(A(t + \Delta t) = T_{a,b}^-\mathbf{A} \mid A(t) = \mathbf{A}) \approx \mu' \Delta t + o(\Delta t),
\]
where $o(\cdot)$ is the little-o notation. We call $\mu' > 0$ the edge opening rate. DBP assumes that $\mu'$ is the same for all closed edges. For applications where edge opening is rare, $\mu'$ can be arbitrarily small as long as it is not 0.

\textbf{Assumption 4:} The time it takes for an edge $(a,b)$ to go from \emph{open} to \emph{close} (e.g., $t_2 - t_1$ as in Figure~\ref{fig:evolutionx(t)}) is exponentially distributed with rate
\begin{equation}\label{eq:formationrate}
q(\mathbf{A}, T_{a,b}^+\mathbf{A}) = \lambda'\gamma'^{f(\mathcal{N}_a,\mathcal{N}_b)},
\end{equation}
where $\mathcal{N}_a$ and $\mathcal{N}_b$ denotes the set of closed edges connected to node $a$ and the set of closed edges connected to node $b$ in network state $\mathbf{A}$, respectively. We call the function $f(\mathcal{N}_a,\mathcal{N}_b)$ the \emph{cascade function}. It captures how the edge closure rate of $(a,b)$ depends on the local neighborhood of edge $(a,b)$. In this paper, we consider the following cascade functions:

\begin{enumerate}

\item SUD-DBP (Sum-Dependent Dynamic Bond Percolation):
\begin{equation}\label{eq:SUDform}
 f(\mathcal{N}_a,\mathcal{N}_b) = |\mathcal{N}_a| + |\mathcal{N}_b|
\end{equation}
\item TRI-DBP (Triangle-Dependent Dynamic Bond Percolation):
\begin{equation}\label{eq:TRIform}
f(\mathcal{N}_a,\mathcal{N}_b) = |\mathcal{N}_a \cap \mathcal{N}_b|.
\end{equation}
\item POD-DBP (Product-Dependent Dynamic Bond Percolation): 
\begin{equation}\label{eq:PODform}
f(\mathcal{N}_a,\mathcal{N}_b) = |\mathcal{N}_a||\mathcal{N}_b|.
\end{equation}
\end{enumerate}

SUD-DBP assumes that the closure rate of an edge depends on the sum of the number of closed neighboring edges. POD-DBP assumes that the closure rate depends on the product of the number of closed neighboring edges. These model different dynamics because POD-DBP implicitly accounts for imbalance between the number of closed edges on node $a$, $|\mathcal{N}_a|$, and node $b$, $|\mathcal{N}_b|$. For example, consider the two scenarios in Figure~\ref{fig:scenarios}. Under SUD-DBP, the closure rate of edge $(a,b)$ is $\lambda'\gamma'^{6}$ for both Scenarios A and B. Under POD-DBP, the closure rate of edge $(a,b)$ is $\lambda'\gamma'^{9}$ for Scenario A and $\lambda'\gamma'^{5}$ for Scenario B.

TRI-DBP assumes that the closure rate of an edge depends on the number of closed neighboring edges sharing a common agent. This comes from the concept of triadic closure, which states that, for 3 agents $a, b$, and $c$, if there is a connection between $(a,b)$ and $(a,c)$, then it is more likely that there is a connection between $(b,c)$ \cite{brandes2013studying}. Under TRI-DBP, the closure rate of edge $(a,b)$ is $\lambda'$ for both Scenario A and $\lambda'\gamma'$ for Scenario B. 

When $f(\mathcal{N}_a,\mathcal{N}_b)  = 0$, the transition rate $q(\mathbf{A}, T_{a,b}^+\mathbf{A}) = \lambda'$. This means that it it is possible for an edge $(a,b)$ to close independent of the state of neighboring edges. Therefore, we consider $\lambda'$ to be the spontaneous edge closure rate and $\gamma'$ to be the cascading edge closure rate.

\begin{figure*}
	\centering
        \begin{subfigure}[b]{0.5\textwidth}
                \centering
                \includegraphics[width=2in]{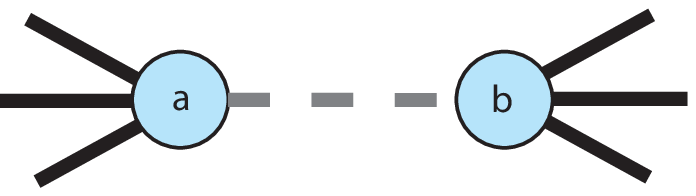}
                \caption{Scenario A: $|\mathcal{N}_a|=3, |\mathcal{N}_b| = 3$.}
                \label{fig:3to3}
        \end{subfigure}%
        ~ %add desired spacing between images, e. g. ~, \quad, \qquad etc.
          %(or a blank line to force the subfigure onto a new line)
        \begin{subfigure}[b]{0.5\textwidth}
                \centering
                \includegraphics[width=1.6in]{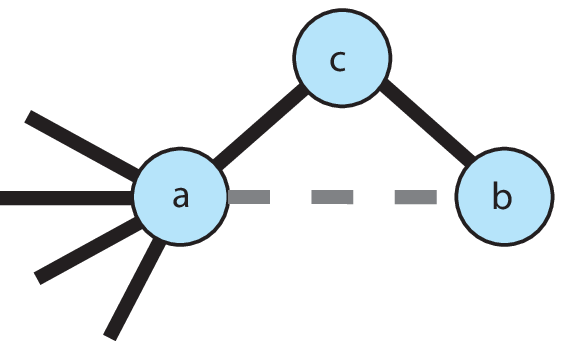}
                \caption{Scenario B: $|\mathcal{N}_a| = 5, |\mathcal{N}_b| = 1$.}
                \label{fig:5to1}
        \end{subfigure}

        \caption{Different edge removal scenarios. Solid edges are \emph{closed}. Dashed edges are \emph{open}.}
        \label{fig:scenarios}
\end{figure*}

Since the transition rates are not time dependent, DBP is a time-homogenous process. Since the model assumes both spontaneous edge opening and close, there is no absorbing state. Under the stated assumptions, the Dynamic Bond Percolation model, $\{A(t), t \ge 0\}$, is an irreducible, continuous-time Markov process with finite state space, $\mathcal{A} = \{\mathbf{A}\}$; each state in the Markov process corresponds to a potential network state $\mathbf{A}$. The dimension of the configuration space is $|\mathcal{A}| = 2^{|E_{\max}|}$. 

%%%%%%%%%%%%%%
\section{Time-Asymptotic Behavior}\label{sec:equilibriumdistribution}

%As a stochastic system, the complete dynamics of the Dynamic Bond Percolation process is described by the change of probability 

While it is difficult to completely characterize the dynamics of the Dynamic Bond Percolation process, its time-asymptotic behavior $\left(\text{i.e., } \lim_{ t \to \infty} A(t) \right)$ can be studied using its equilibrium distribution, $\pi$. The equilibrium distribution of DBP is a probability mass function (PMF) over $\mathcal{A}$. Since DBP is a finite-state, continuous-time Markov process, the equilibrium distribution $\pi$ is unique \cite{Kelly}. It can be found by solving the left eigenvalue-eigenvector problem 
\[
\pi \mathbf{Q} = 0,
\] 
where $\mathbf{Q} $ is the transition rate matrix, also known as the infinitesimal matrix \cite{norris1998markov}.

The matrix $\mathbf{Q}$ characterizes the transition rates between all the states in $\mathcal{A}$ using \eqref{eq:dissipationrate}, \eqref{eq:formationrate}. For DBP, it is an asymmetric $2^{|E_{\max}|} \times 2^{|E_{\max}|}$ matrix. Element $\mathbf{Q}_{ij}$ corresponds to the transition rate between 2 states $\mathbf{i}, \mathbf{j} \in \mathcal{A}$ where $i$ and $j$ are decimal scalar representations of the network states $\mathbf{i}$ and $\mathbf{j}$, respectively.
 
The equilibrium distribution of a continuous-time Markov process is the left eigenvector of the transition rate matrix corresponding to the zero eigenvalue. However, the challenge of finding $\pi$ is that the dimensions of $\mathbf{Q}$ scales exponentially with the total number of edges in the maximal network, $|E_{\max}|$. This makes computing the equilibrium distribution prohibitively expensive for large networks. In the next section, we will show that we can avoid this computation for SUD-DBP, TRI-DBP, and POD-DBP by finding the equilibrium distribution, $\pi$, up to a constant, in closed form.

%%%%%%%%%%%% %
%%%%%%%%%%%% Graph Theoretic Concepts
\subsection{Review of Graph Theoretic Concepts}\label{sec:graphreview}
First, we review graph theory terms necessary for the rest of the paper \cite{west2001introduction, Jackson}.

\begin{definition}
A walk is a list $v_0, e_1, v_1, e_2, \ldots, e_k, v_k$ of vertices and edges. The length of the walk is the number of edges in the list. The number of walks in an undirected graph from node $i$ to node $j$ of length $k$ is 
\begin{equation*}\label{eq:numwalk}
(\mathbf{A}^k)_{i,j},
\end{equation*}
where $\mathbf{A}^k$ is the adjacency matrix of an undirected, unweighted graph raised to the $k$th power.

\end{definition}
\begin{definition}
A path is a walk where all the vertices are distinct (although some literature does not make this distinction between paths and walks). A graph that is a path is called a path graph and written as $P_n$, where $n$ is the number of vertices (not edges) in the path. By convention, the path graph $P_n$ is equivalent to a path of length $n-1$. Figure~\ref{fig:P3} shows the $P_3$ subgraph and Figure~\ref{fig:P4} shows the $P_4$ subgraph.
\end{definition}

\begin{definition}
A cycle is a path where the endpoints $v_0 = v_k$. A graph that is a cycle is called a cycle graph and written as $C_n$, where $n$ is the number of vertices (not edges) in the cycle. By convention, the cycle graph $C_n$ is equivalent to a cycle of length $n$. Figure~\ref{fig:C3} shows the $C_3$ subgraph. $C_3$ subgraphs are also called triangles.
\end{definition}

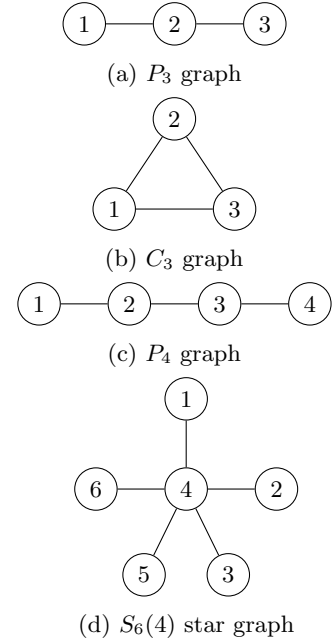
\begin{figure}[h]
        \centering
             \begin{subfigure}[b]{0.2\textwidth}
                \centering
       \begin{tikzpicture}
  [scale=.8,auto=left,every node/.style={circle,draw= black}]

  \node (n1) at (1,1) {1};
  \node (n2) at (2.5,1)  {2};
  \node (n3) at (4,1)  {3};
 
  \foreach \from/\to in {n1/n2,n2/n3}
    \draw (\from) -- (\to);

\end{tikzpicture}
                \caption{$P_3$ graph}
                \label{fig:P3}
        \end{subfigure}
        %~ %add desired spacing between images, e. g. ~, \quad, \qquad etc.
          %(or a blank line to force the subfigure onto a new line)

 \begin{subfigure}[b]{0.3\textwidth}
                \centering
              \begin{tikzpicture}
  [scale=.8,auto=left,every node/.style={circle,draw= black}]

  \node (n1) at (1,1) {1};
  \node (n2) at (2 ,2.5)  {2};
  \node (n3) at (3,1)  {3};
 
  \foreach \from/\to in {n1/n2,n2/n3, n1/n3}
    \draw (\from) -- (\to);

\end{tikzpicture}
                \caption{$C_3$ graph}
                \label{fig:C3}
        \end{subfigure}
    %~ %add desired spacing between images, e. g. ~, \quad, \qquad etc.
          %(or a blank line to force the subfigure onto a new line)     

        \begin{subfigure}[b]{0.3\textwidth}
                \centering
           \begin{tikzpicture}
  [scale=.8,auto=left,every node/.style={circle,draw= black}]

  \node (n1) at (1,1) {1};
  \node (n2) at (2.5,1)  {2};
  \node (n3) at (4,1)  {3};
  \node (n4) at (5.5,1)  {4};
 
  \foreach \from/\to in {n1/n2,n2/n3,n3/n4}
    \draw (\from) -- (\to);

\end{tikzpicture}
                \caption{$P_4$ graph}
                \label{fig:P4}
                 \end{subfigure}%
                    %~ %add desired spacing between images, e. g. ~, \quad, \qquad etc.
          %(or a blank line to force the subfigure onto a new line)     

        \begin{subfigure}[b]{0.5\textwidth}
                \centering
           \begin{tikzpicture}
  [scale=.8,auto=left,every node/.style={circle,draw= black}]

  \node (n1) at (0,1.5) {1};
  \node (n2) at (1.5,0)  {2};
  \node (n3) at (0.7,-1.4266)  {3};
  \node (n4) at (0,0)  {4};
\node (n5) at (-0.7,-1.4266)  {5};
   \node (n6) at (-1.5,0)  {6};
 
  \foreach \from/\to in {n1/n4,n2/n4,n3/n4, n5/n4, n6/n4}
    \draw (\from) -- (\to);

\end{tikzpicture}
                \caption{$S_6(4)$ star graph}
                \label{fig:stargraph}

        \end{subfigure}%
 
        \caption{$P_3$, $C_3$, $P_4$ and $S_6(4)$ subgraphs}\label{fig:p4c3}
\end{figure}

\begin{definition}
A star graph, $S_n(i)$, has $n$ vertices that are only connected to the center vertex $i$. Figure~\ref{fig:stargraph} shows the $S_6(4)$ subgraph.
\end{definition}

\begin{definition}\label{def:matching}\cite{west2001introduction}
A matching, $\mathcal{M}$, of the graph $G(V,E)$, also called the Independent Edge Set, is a subset of edges $E$ such that no vertex in $V$ is incident to more than one edge in $\mathcal{M}$; see Figure~\ref{fig:matching}. Maximum Matching is a matching with the maximum number of edges; see Figure~\ref{fig:maxmatching}. 
\end{definition}

The number of edges in the maximum matching is known as the matching number, $\nu(G)$. 

In this paper, we introduce a generalization to the matching set.
\begin{definition}\label{def:starmatching}
A star matching, $\mathcal{S}$, of the graph $G(V,E)$, is a subset of edges $E$ such that these edges form a collection of disconnected star graphs; see Figure~\ref{fig:starmatching}. Maximum star matching is a star matching with the maximum number of edges; see Figure~\ref{fig:maxstarmatching}. Note that $\mathcal{M} \subset \mathcal{S}$.
\end{definition}

\begin{figure*}
	\centering
        \begin{subfigure}[b]{0.37\textwidth}
                \centering
                \includegraphics[width =\textwidth]{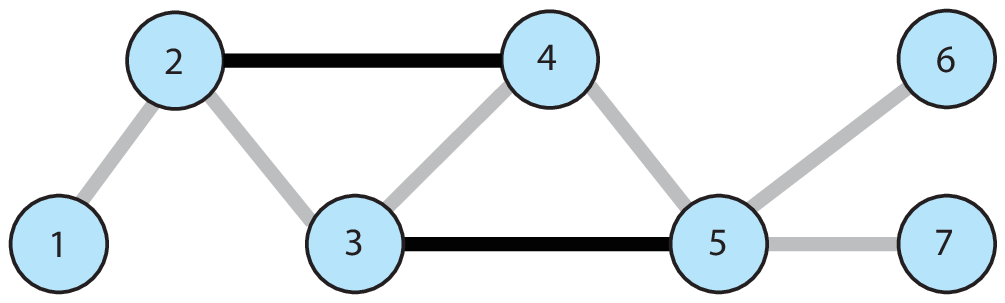}
                \caption{Matching: $\{(2,4), (3,5)\}$}
                \label{fig:matching}
        \end{subfigure}%
        ~ %add desired spacing between images, e. g. ~, \quad, \qquad etc.
          %(or a blank line to force the subfigure onto a new line)
        \begin{subfigure}[b]{0.37\textwidth}
                \centering
                \includegraphics[width =\textwidth]{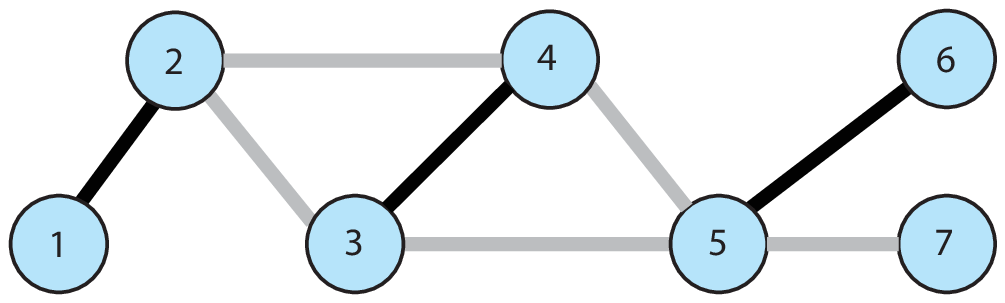}
                \caption{Max Matching: $\{(1,2), (3,4), (5,6)\}$}
                \label{fig:maxmatching}
        \end{subfigure}
        ~ %add desired spacing between images, e. g. ~, \quad, \qquad etc.
          %(or a blank line to force the subfigure onto a new line)
          
           \begin{subfigure}[b]{0.37\textwidth}
                \centering
                \includegraphics[width =\textwidth]{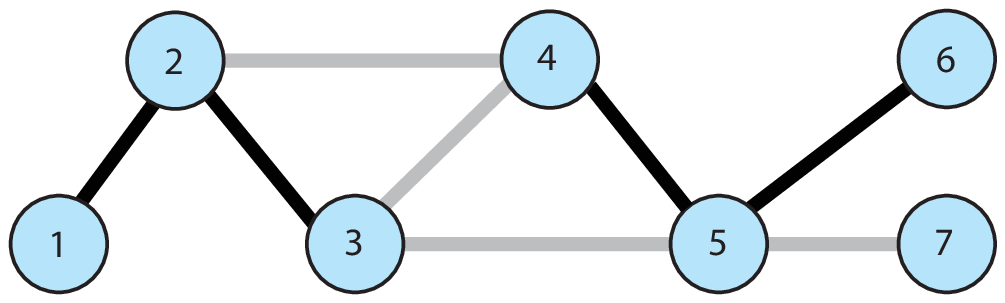}
                \caption{Star Matching: $\{(1,2),(2,3), (4,5), (5,6)\}$}
                \label{fig:starmatching}
        \end{subfigure}
        ~ %add desired spacing between images, e. g. ~, \quad, \qquad etc.
          %(or a blank line to force the subfigure onto a new line)
            \begin{subfigure}[b]{0.37\textwidth}
                \centering
                \includegraphics[width =\textwidth]{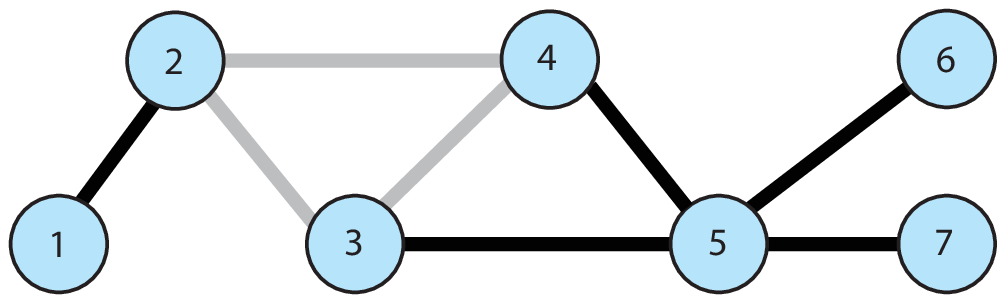}
                \caption{Max Star Matching: $\{(1,2), (3,5),(4,5), (5,6), (6,7) \}$}
                \label{fig:maxstarmatching}
        \end{subfigure}
        ~ %add desired spacing between images, e. g. ~, \quad, \qquad etc.
          %(or a blank line to force the subfigure onto a new line)       
        \caption{Matching and Star Matching}\label{fig:matchingall}
\end{figure*}

%%%%%%%%%%%%%%%%Reversibility and Equilibrium Distribution
\subsection{Reversibility and Equilibrium Distribution}\label{sec:equilibrium}
Some Markov processes possess the property that the process forward in time is statistically equivalent to the process backward in time. These Markov processes are called \emph{reversible} Markov processes. The following theorem is important in deriving the stationary distribution of a reversible Markov process:

\begin{theorem}[From \cite{Kelly}]\label{reversible}
A stationary Markov process is reversible if and only if there exists a collection of positive number $\pi(j), j \in \mathcal{L}$, summing to unity that satisfy the detailed balance conditions
\[
\pi(j)q(j,k) = \pi(k)q(k,j), \quad j,k, \in  \mathcal{L},
\]
where $q(\cdot,\cdot)$ is the transition rate of the Markov process. When there exists such a collection $\pi(j), j \in \mathcal{L}$, it is the equilibrium distribution of the process.
\end{theorem}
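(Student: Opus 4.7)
The plan is to prove both directions of the equivalence separately, and then identify $\pi$ as the equilibrium distribution.

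First, for the forward direction, I would assume the stationary process $\{A(t), t \ge 0\}$ is reversible. Reversibility means that for every finite collection of times $t_1 < t_2 < \cdots < t_n$ and every $\tau$, the joint law of $(A(t_1),\dots,A(t_n))$ equals that of $(A(\tau-t_1),\dots,A(\tau-t_n))$. Specializing to two times $t$ and $t+\Delta t$ and using stationarity, one obtains
\[
\pi(j)\,P(A(t+\Delta t)=k\mid A(t)=j)=\pi(k)\,P(A(t+\Delta t)=j\mid A(t)=k).
\]
For $j\ne k$, the continuous-time Markov property gives $P(A(t+\Delta t)=k\mid A(t)=j)=q(j,k)\,\Delta t+o(\Delta t)$. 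Dividing both sides by $\Delta t$ and letting $\Delta t\downarrow 0$ yields detailed balance $\pi(j)q(j,k)=\pi(k)q(k,j)$.

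For the reverse direction, assume a PMF $\pi$ on $\mathcal{L}$ satisfies detailed balance. I would first check that $\pi$ is a stationary distribution by summing detailed balance over $k\ne j$:
\[
\sum_{k\ne j}\pi(k)q(k,j)=\sum_{k\ne j}\pi(j)q(j,k)=-\pi(j)q(j,j),
\]
where the last equality uses the standard convention $q(j,j)=-\sum_{k\ne j}q(j,k)$. This rearranges to $\sum_{k}\pi(k)q(k,j)=0$ for every $j$, i.e., $\pi\mathbf{Q}=0$. Since the generator $\mathbf{Q}$ of a finite, irreducible continuous-time Markov chain has a unique left null vector (up to scaling) that is a probability distribution, $\pi$ must be the equilibrium distribution.

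Next, I need to upgrade local detailed balance to full reversibility of the sample paths in stationarity. The cleanest route is to consider the time-reversed chain: define candidate reverse rates
\[
\tilde q(j,k)=\frac{\pi(k)q(k,j)}{\pi(j)},\qquad j\ne k.
\]
Detailed balance immediately gives $\tilde q(j,k)=q(j,k)$ for all $j\ne k$, so the reversed process has the same infinitesimal generator as the original one. Since a continuous-time Markov chain in stationarity is determined by its initial distribution (here $\pi$, common to both) together with its generator, the original process and its time reversal have identical finite-dimensional distributions, which is exactly reversibility. To make this last step rigorous, I would verify by induction on $n$ that
\[
\pi(j_1)\prod_{i=1}^{n-1}P_{t_{i+1}-t_i}(j_i,j_{i+1})=\pi(j_n)\prod_{i=1}^{n-1}P_{t_{i+1}-t_i}(j_{i+1},j_i),
\]
where $P_t=e^{t\mathbf{Q}}$; the induction step reduces, after applying $\pi\mathbf{Q}=\mathbf{Q}^{T}\pi^{T}=0$ componentwise via detailed balance, to the two-time identity established above.

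The main obstacle is this last step: detailed balance is a statement about infinitesimal rates, whereas reversibility is a statement about joint distributions at arbitrary finite time collections. Bridging that gap requires either the generator argument above or, equivalently, showing $\pi(j)P_t(j,k)=\pi(k)P_t(k,j)$ for all $t\ge 0$. The latter can be obtained by Taylor-expanding $P_t$, checking the identity term by term for $\mathbf{Q}^m$ using induction on $m$ and repeated use of detailed balance. The forward direction, by contrast, is a nearly mechanical consequence of the Markov property plus the definition of transition rates.
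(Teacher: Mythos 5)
The paper does not prove this statement at all; it is imported verbatim from \cite{Kelly} and used as a black box, so there is no internal proof to compare against. Your two-direction argument is correct and is essentially the standard textbook proof: the forward direction via two-time marginals and the infinitesimal definition of $q(j,k)$, and the converse via summing detailed balance to get $\pi\mathbf{Q}=0$ plus upgrading to the finite-dimensional statement through the identity $\pi(j)P_t(j,k)=\pi(k)P_t(k,j)$ (which follows from $D_\pi\mathbf{Q}=\mathbf{Q}^{T}D_\pi$ applied to each power $\mathbf{Q}^m$ in $P_t=e^{t\mathbf{Q}}$). The only step you should not leave as a shortcut is the claim that the time reversal of the stationary chain is Markov with rates $\tilde q(j,k)=\pi(k)q(k,j)/\pi(j)$; your alternative route through $P_t$ and the telescoping product over $t_1<\dots<t_n$ makes that unnecessary, so the argument as a whole stands.
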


Using this theorem, we will show that the equilibrium distributions for SUD-DBP, POD-DBP, and TRI-DBP have the form:
\begin{equation}
\pi(\mathbf{A}) = \frac{1}{Z} \left(\frac{\lambda'}{\mu'}\right)'^{|E(\mathbf{A})|} \gamma'^{g(E(\mathbf{A}))}, \quad \mathbf{A} \in \mathcal{A},
\end{equation}
where the partition function, $Z$, is 
\begin{equation}\label{eq:partition}
Z = \sum_{\mathbf{A} \in \mathcal{A}}\left(\frac{\lambda'}{\mu'}\right)^{|E(\mathbf{A})|}\gamma'^{g(E(\mathbf{A}))}.
\end{equation}
The exponent $|E(\mathbf{A})|$ is the total number of closed edges in a network state, and $g(E(\mathbf{A}))$ is the number of special network structures induced by the set of closed edges $E(\mathbf{A})$. These special network structures are derived for SUD-DBP, POD-DBP, TRI-DBP in sections~\ref{sec:SUB-DBP}, ~\ref{sec:POD-DBP}, and~\ref{sec:TRI-DBP},  respectively.

The equilibrium distribution is the product of three terms: the partition function $Z$, a \emph{structure-free} term, and a \emph{structure-dependent} term that depends on the maximal network. Let $E(\mathbf{A})$ denote the set of closed edges in network state $\mathbf{A}$. The term $\left(\frac{\lambda'}{\mu'}\right)^{|E(\mathbf{A})|}$ is structure-free because it depends only on the number of close edges in network state whereas the term $\gamma'^{g(E(\mathbf{A}))}$ depends on the underlying maximal network topology.

%this form is the same as the commonly used Ising model \cite{hartmann2006phase}. In the language of statistical mechanics, the topology-independent term is called the external magnetics term while the topology-dependent term is called the nearest-neighbor interaction term. What is \emph{surprising} is that the Ising model is a static model whereas we are referring to the equilibrium distribution of a dynamic process. 

\subsection{(SUD-DBP) Sum-Dependent Dynamic Bond Percolation Model}\label{sec:SUB-DBP}

\begin{theorem}\label{thm:SUD}
The Sum-Dependent Dynamic Bond Percolation model, $\{A(t), t \ge 0\}$, is a reversible Markov process and the equilibrium distribution is
\begin{equation}\label{eq:sumeq}
\pi(\mathbf{A}) = \frac{1}{Z}\left(\frac{\lambda'}{\mu'}\right)^{|E(\mathbf{A})|}\gamma'^{g(E(\mathbf{A}))}, \quad \mathbf{A} \in \mathcal{A},
\end{equation}
where $g(E(\mathbf{A}))$ it the number of $P_3$ subgraphs induced by the set of closed edges $E(\mathbf{A})$.

The number of \emph{close} edges in a network state is 
\begin{equation*}
|E(\mathbf{A})| = \frac{\mathbf{1}^T\mathbf{A}\mathbf{1}}{2},
\end{equation*}
where $\mathbf{1} = [1,1,\ldots,1]^T$.

The number of $P_3$ subgraphs induced by the edges $E(\mathbf{A})$ in a network state is
\begin{equation}\label{eq:gSUD}
g(E(\mathbf{A})) =  \sum_{i=1}^N \sum_{j > i} (\mathbf{A}^2)_{i,j} = \sum_{i=1}^N \binom{k_i}{2},
\end{equation}
where $k_i = \sum_{j = 1}^N \mathbf{A}_{ij}$ and is the number of closed edges at node $i$. The derivation of the closed-from equation of $g(E(\mathbf{A}))$ is in Appendix~\ref{sec:gsud}

\end{theorem}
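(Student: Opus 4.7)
The plan is to invoke Theorem \ref{reversible} (Kelly) and verify the detailed balance equations directly, since the candidate $\pi$ has already been written down. By Assumption 2 only one edge changes per transition, so it suffices to check detailed balance for pairs $(\mathbf{A}, \mathbf{A}')$ with $\mathbf{A}' = T_{a,b}^+\mathbf{A}$, i.e., states differing only in the state of edge $(a,b)\in E_{\max}$. For such a pair, \eqref{eq:dissipationrate} and \eqref{eq:formationrate} give $q(\mathbf{A}, \mathbf{A}') = \lambda' \gamma'^{|\mathcal{N}_a|+|\mathcal{N}_b|}$ and $q(\mathbf{A}', \mathbf{A}) = \mu'$, where $\mathcal{N}_a, \mathcal{N}_b$ are computed in $\mathbf{A}$ (so they do not include $(a,b)$). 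Since all other pairs have $q = 0$, detailed balance reduces to a single algebraic identity per adjacent pair.

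Next I would compute the ratio $\pi(\mathbf{A}')/\pi(\mathbf{A})$ from the proposed form \eqref{eq:sumeq}. Since $|E(\mathbf{A}')| = |E(\mathbf{A})|+1$, the structure-free part contributes exactly $\lambda'/\mu'$. Detailed balance therefore collapses to
\begin{equation*}
g(E(\mathbf{A}')) - g(E(\mathbf{A})) \;=\; |\mathcal{N}_a| + |\mathcal{N}_b|.
\end{equation*}
This is the heart of the argument. I would argue it combinatorially: a $P_3$ subgraph is a pair of closed edges sharing exactly one endpoint. Closing the new edge $(a,b)$ creates new $P_3$'s only when $(a,b)$ is paired with a previously closed edge incident to $a$ or to $b$. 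The number of such pairings is $|\mathcal{N}_a|$ (the new $P_3$'s centered at $a$) plus $|\mathcal{N}_b|$ (those centered at $b$); no $P_3$'s centered at other nodes are affected since $(a,b)$ is not incident to them. This gives the required identity, and hence detailed balance; Theorem \ref{reversible} then guarantees both reversibility and that $\pi$, once normalized by $Z$, is the unique equilibrium distribution (uniqueness being inherited from the fact that DBP is irreducible on the finite state space $\mathcal{A}$, as already noted in Section \ref{sec:model}).

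It remains to justify the two closed-form counting formulas. The edge-count expression $|E(\mathbf{A})| = \tfrac{1}{2}\mathbf{1}^T\mathbf{A}\mathbf{1}$ is immediate because the sum of entries of a symmetric $0$--$1$ adjacency matrix double-counts edges. For $g$, I would use the center-of-$P_3$ viewpoint: every $P_3$ has a unique central vertex, and the number of $P_3$'s centered at node $i$ is the number of unordered pairs of closed edges at $i$, namely $\binom{k_i}{2}$. Summing over $i$ gives $\sum_i \binom{k_i}{2}$. The equivalent form $\sum_{i<j}(\mathbf{A}^2)_{i,j}$ follows from the standard fact, restated in the walk definition, that $(\mathbf{A}^2)_{i,j}$ counts walks of length $2$ from $i$ to $j$; for $i\ne j$ in a simple graph these walks are precisely $P_3$'s with endpoints $i,j$, so summing over unordered pairs $\{i,j\}$ counts each $P_3$ exactly once. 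The paper defers the detailed derivation of \eqref{eq:gSUD} to Appendix \ref{sec:gsud}.

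I expect the only subtle point to be bookkeeping in the $P_3$-counting identity, in particular ensuring that $(a,b)$ itself is not inadvertently counted inside $\mathcal{N}_a$ or $\mathcal{N}_b$ (it is not, because $\mathcal{N}_a,\mathcal{N}_b$ are defined in $\mathbf{A}$, where $(a,b)$ is open) and that no $P_3$ is double-counted when it could be seen as centered at both $a$ and $b$ (impossible, since a $P_3$ has a unique center). Once these points are handled, the proof is a clean one-line verification of detailed balance.
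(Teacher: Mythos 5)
Your proposal is correct and follows essentially the same route as the paper: verify the detailed balance conditions of Theorem~\ref{reversible} for states differing in a single edge, with the key step being the identity $g(E(T^+_{a,b}\mathbf{A})) = g(E(\mathbf{A})) + |\mathcal{N}_a| + |\mathcal{N}_b|$, which the paper establishes by counting the new length-2 paths through $a$ and $b$ exactly as you do with the center-of-$P_3$ argument. The counting formulas are likewise justified the same way (the paper defers \eqref{eq:gSUD} to Appendix~\ref{sec:gsud}), so no substantive difference remains.
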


In the SUD-DBP model, the sufficient statistics are the total number of close edges and the total number of paths of length 2 (i.e., the number of $P_3$ subgraphs) induced by the closed edges. The number of $P_3$ subgraphs induced by the closed edges is related also to the degree of the network state $\mathbf{A}$. Surprisingly, this means that a sufficient statistic of a process where edges change state in time is a nodal characteristic. 

%%%%%%%%%%%%%%%%%%%%%%%
\begin{proof}
We prove that the equilibrium distribution $\pi(\mathbf{A})$ of the SUD-DBP is given by \eqref{eq:sumeq}. 

By Theorem~\ref{reversible}, $\pi(\mathbf{A})$ satisfies the detailed balance equations:
\begin{equation}\label{eq:detailedbalance12}
\pi(\mathbf{A})q(\mathbf{A}, T^+_{a,b}\mathbf{A}) = \pi(T^+_{a,b}\mathbf{A})q(T^+_{a,b}\mathbf{A}, \mathbf{A})
\end{equation}
and
\begin{equation}\label{eq:detailedbalance22}
\pi(\mathbf{A})q(\mathbf{A}, T^-_{a,b}\mathbf{A}) = \pi(T^-_{a,b}\mathbf{A})q(T^-_{a,b}\mathbf{A}, \mathbf{A}).
\end{equation}

We prove~\eqref{eq:detailedbalance12} first. 

Let $T^+_{a,b}\mathbf{A}$ denote the network state that is the same as network state $\mathbf{A}$ except edge $(a,b)$ switches from open to close. According to the transition rate in \eqref{eq:formationrate} for SUD-DBP \eqref{eq:SUDform}, 
\[
q(\mathbf{A}, T^+_{a,b}\mathbf{A}) = \lambda'\gamma'^{k_a+k_b},
\]
where $k_a = |\mathcal{N}_a| =  \sum_{i = 1}^N \mathbf{A}_{ai}$ and $k_b = |\mathcal{N}_b| = \sum_{i = 1}^N \mathbf{A}_{bi}$ are the number of closed edges incident at node $a$ and $b$, respectively.

According to the equilibrium distribution of SUD-DBP \eqref{eq:sumeq}, the equilibrium probability of network state $\mathbf{A}$ is
\begin{equation*}
\pi(\mathbf{A}) = \frac{1}{Z}\left(\frac{\lambda'}{\mu'}\right)^{|E(\mathbf{A})|}\gamma'^{g(E(\mathbf{A}))},
\end{equation*}
where $g(E(\mathbf{A}))$ is the number of $P_3$ subgraphs induced by the set of closed edges $E(\mathbf{A})$. The LHS of \eqref{eq:detailedbalance12} is

\begin{equation}
\begin{split}
&\pi(\mathbf{A})q(\mathbf{A}, T^+_{a,b}\mathbf{A}) \\
&=  \frac{1}{Z}\left(\frac{\lambda'}{\mu'}\right)^{|E(\mathbf{A})|}\gamma'^{g(E(\mathbf{A}))}\left(\lambda' \gamma'^{k_ak_b} \right)\\
& =  \frac{1}{Z}\left(\frac{\lambda'^{|E(\mathbf{A})| + 1}}{\mu'^{|E(\mathbf{A})|}}\right)\gamma'^{g(E(\mathbf{A})) + k_a+k_b}.
\end{split}
\end{equation}

According to the transition rate in \eqref{eq:dissipationrate},
\[
q(T^+_{a,b}\mathbf{A}, \mathbf{A}) = \mu'.
\]
Recognize that by definition, $|E(T^+_{a,b}\mathbf{A})| = 1 + |E(\mathbf{A})|$. Consider the closure of edge $(a,b)$. This means that the number of paths of length 2 from any node in $\mathcal{N}_a$ to the node $b$ is $k_a$ and the number of paths of length 2 from the node $a$ to any node in $\mathcal{N}_b$ in $k_b$. Therefore, $g(E(T^+_{a,b}\mathbf{A})) = g(E(\mathbf{A}))+ k_a + k_b$.

The RHS of \eqref{eq:detailedbalance12} is

\begin{equation}
\begin{split}
&\pi(T^+_{a,b}\mathbf{A})q(T^+_{a,b}\mathbf{A}, \mathbf{A}) \\
&= \frac{1}{Z}\left(\frac{\lambda'}{\mu'} \right)^{|E(T^+_{a,b}\mathbf{A})|}\gamma'^{g(E(T^+_{a,b}\mathbf{A}))}\mu'\\
&= \frac{1}{Z}\left(\frac{\lambda'^{|E(\mathbf{A})| + 1}}{\mu'^{|E(\mathbf{A})|}}\right)\gamma'^{g(E(\mathbf{A})) + k_a+k_b}.
\end{split}
\end{equation}

The LHS and RHS of \eqref{eq:detailedbalance12} are equivalent. Similar reasoning holds for \eqref{eq:detailedbalance22}. Since the detailed balance equations are satisfied by \eqref{eq:sumeq}, Theorem~\ref{reversible} proves Theorem~\ref{thm:SUD}.

\end{proof}

%%%%%%%%%%%%%%%%%%%%%%%%%
\subsection{(TRI-DBP) Triangle-Dependent Dynamic Bond Percolation Model}\label{sec:TRI-DBP}

\begin{theorem}\label{thm:TRI}
The Triangle-Dependent Dynamic Bond Percolation model, $\{A(t), t \ge 0\}$, is a reversible Markov process and the equilibrium distribution is
\begin{equation}\label{eq:TRIeq}
\pi(\mathbf{A}) = \frac{1}{Z}\left(\frac{\lambda'}{\mu'}\right)^{|E(\mathbf{A})|}\gamma'^{g(E(\mathbf{A}))}, \quad \mathbf{A} \in \mathcal{A},
\end{equation}
where $g(E(\mathbf{A}))$ it the number of $C_3$ subgraphs induced by the set of closed edges $E(\mathbf{A})$.

The number of \emph{close} edges in a network state is 
\begin{equation*}
|E(\mathbf{A})| = \frac{\mathbf{1}^T\mathbf{A}\mathbf{1}}{2},
\end{equation*}
where $\mathbf{1} = [1,1,\ldots,1]^T$.

The number of $C_3$ subgraphs induced by the edges $E(\mathbf{A})$ in a network state is \cite{west2001introduction}
\begin{equation}\label{eq:TRI}
g(E(\mathbf{A})) =  \sum_{i = 1}^N \frac{(\mathbf{A}^3)_{i,i}}{6}.
\end{equation}

\end{theorem}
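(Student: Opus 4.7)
The plan is to mirror the proof of Theorem~\ref{thm:SUD} by applying Theorem~\ref{reversible}: I would propose the candidate $\pi(\mathbf{A})$ in \eqref{eq:TRIeq} and verify the detailed balance equations \eqref{eq:detailedbalance12}--\eqref{eq:detailedbalance22}. Since $T^-_{a,b}(T^+_{a,b}\mathbf{A}) = \mathbf{A}$, verifying the closure equation suffices. For an arbitrary open edge $(a,b)$ in $\mathbf{A}$, the transition rates are pinned down by the model: $q(\mathbf{A}, T^+_{a,b}\mathbf{A}) = \lambda'\gamma'^{|\mathcal{N}_a \cap \mathcal{N}_b|}$ by \eqref{eq:formationrate} and \eqref{eq:TRIform}, and $q(T^+_{a,b}\mathbf{A}, \mathbf{A}) = \mu'$ by \eqref{eq:dissipationrate}. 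So the only real content is how the two exponents $|E(\cdot)|$ and $g(E(\cdot))$ behave under closing $(a,b)$.

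The edge count satisfies $|E(T^+_{a,b}\mathbf{A})| = |E(\mathbf{A})| + 1$ by definition. The crux of the proof is the combinatorial identity
\[
g(E(T^+_{a,b}\mathbf{A})) = g(E(\mathbf{A})) + |\mathcal{N}_a \cap \mathcal{N}_b|,
\]
which I would establish directly. Any triangle induced in $T^+_{a,b}\mathbf{A}$ but not in $\mathbf{A}$ must use the newly closed edge $(a,b)$, hence has the form $\{a,b,c\}$; such a $c$ contributes a new $C_3$ exactly when both $(a,c)$ and $(b,c)$ are already closed in $\mathbf{A}$, i.e., when $c$ is a common neighbor of $a$ and $b$ through closed edges. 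Interpreting $|\mathcal{N}_a \cap \mathcal{N}_b|$ in the triadic-closure sense (common closed-edge neighbors of $a$ and $b$), the count of such $c$ is exactly the TRI-DBP cascade function \eqref{eq:TRIform}.

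With this identity in hand, both sides of \eqref{eq:detailedbalance12} collapse to $\frac{1}{Z}\frac{\lambda'^{|E(\mathbf{A})|+1}}{\mu'^{|E(\mathbf{A})|}}\gamma'^{g(E(\mathbf{A}))+|\mathcal{N}_a \cap \mathcal{N}_b|}$, entirely parallel to the SUD-DBP calculation, and Theorem~\ref{reversible} then delivers the equilibrium distribution. The auxiliary identity \eqref{eq:TRI} is a standard graph-theoretic fact: $(\mathbf{A}^3)_{i,i}$ counts closed walks of length $3$ at $i$, each triangle through $i$ contributing two such walks (one in each orientation), so summing over the $N$ vertices counts each triangle six times.

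The main obstacle is the triangle-counting identity: I have to argue both that no pre-existing triangle has its count disturbed by closing $(a,b)$, and that every genuinely new triangle is accounted for exactly once by a common neighbor. Both directions rest on the clean bijection between common closed-edge neighbors of $\{a,b\}$ and new $C_3$ subgraphs through $(a,b)$; once that bijection is stated, the detailed-balance verification is mechanical and the theorem follows by the same Kelly-reversibility argument used for SUD-DBP.
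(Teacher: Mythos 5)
Your proposal is correct and matches the paper's own argument: the paper proves Theorem~\ref{thm:TRI} exactly by rerunning the detailed-balance verification of Theorem~\ref{thm:SUD} with the TRI-DBP rate \eqref{eq:TRIform} and the increment identity $g(E(T^+_{a,b}\mathbf{A})) = g(E(\mathbf{A})) + |\mathcal{N}_a \cap \mathcal{N}_b|$, which is precisely your key step. Your added justification of that identity (new triangles must use $(a,b)$ and biject with common closed-edge neighbors, read in the triadic-closure sense) and of the $(\mathbf{A}^3)_{i,i}/6$ count simply fills in details the paper leaves implicit.
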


In the TRI-DBP model, the sufficient statistics are the total number of close edges and the total number of triangles (i.e., the number of $C_3$ subgraphs) induced by the closed edges. The proof for Theorem~\ref{thm:TRI} follows the same steps as the proof for Theorem~\ref{thm:SUD} except 1) the transition rate $q(\mathbf{A}, T^+_{a,b}\mathbf{A})$ for TRI-DBP is given by \eqref{eq:TRIform}, and 2) the number of $C_3$ subgraphs  $g(E(T^+_{a,b}\mathbf{A})) = |\mathcal{N}_a \cap \mathcal{N}_b| + g(E(\mathbf{A}))$.

%%%%product dependence edge formation
\subsection{(POD-DBP) Product-Dependent Dynamic Bond Percolation Model}\label{sec:POD-DBP}

\begin{theorem}\label{thm:POD}
The Product-Dependent Dynamic Bond Percolation model, $\{A(t), t \ge 0\}$, is a reversible Markov process and the equilibrium distribution is
\begin{equation}\label{eq:podeq}
\pi(\mathbf{A}) = \frac{1}{Z}\left(\frac{\lambda'}{\mu'}\right)^{|E(\mathbf{A})|}\gamma'^{g(E(\mathbf{A}))}, \quad \mathbf{A} \in \mathcal{A},
\end{equation}
where $\mathbf{A}$ is the adjacency matrix, $g(E(\mathbf{A}))$ is the number of triangles $C_3$, and the paths of length 3, $P_4$ formed by the set of closed edges, $E(\mathbf{A})$.

The number of \emph{close} edges, $|E(\mathbf{A})|$, is
\begin{equation*}
|E(\mathbf{A})| = \frac{\mathbf{1}^T\mathbf{A}\mathbf{1}}{2},
\end{equation*}
where $\mathbf{1} = [1,1,\ldots,1]^T$.

The number of  $C_3$ and $P_4$ subgraphs is
\begin{equation}\label{eq:POD}
\begin{split}
&g(E(\mathbf{A})) = \sum_{i = 1}^N \frac{(\mathbf{A}^3)_{i,i}}{6} + \\
& \sum_{i = 1}^N \sum_{j > i} \left[ (\mathbf{A}^3)_{i,j} - (\mathbf{A}_{i,j})\left(  (\mathbf{A}^2)_{i,i} + \sum_{k = 1, k \neq i,j}^N \mathbf{A}_{k,j}\right) \right].
\end{split}
\end{equation}
The derivation of the closed-from equation of $g(E(\mathbf{A}))$ is in Appendix~\ref{sec:gpod}.

\end{theorem}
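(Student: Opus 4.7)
The plan is to follow the template established by the proofs of Theorem~\ref{thm:SUD} and Theorem~\ref{thm:TRI}: by Theorem~\ref{reversible} it suffices to verify that the candidate $\pi$ in~\eqref{eq:podeq} satisfies the detailed balance equations~\eqref{eq:detailedbalance12} and~\eqref{eq:detailedbalance22}. Only the edge-closure identity~\eqref{eq:detailedbalance12} requires explicit work, since~\eqref{eq:detailedbalance22} is the same identity read with the roles of $\mathbf{A}$ and $T^+_{a,b}\mathbf{A}$ swapped. Plugging the POD-DBP rates $q(\mathbf{A},T^+_{a,b}\mathbf{A})=\lambda'\gamma'^{k_ak_b}$ and $q(T^+_{a,b}\mathbf{A},\mathbf{A})=\mu'$ into~\eqref{eq:detailedbalance12} together with $|E(T^+_{a,b}\mathbf{A})|=|E(\mathbf{A})|+1$ and the ansatz~\eqref{eq:podeq}, and cancelling the common prefactor, reduces the balance to a single combinatorial identity
\begin{equation*}
g\bigl(E(T^+_{a,b}\mathbf{A})\bigr) - g\bigl(E(\mathbf{A})\bigr) \;=\; k_ak_b,
\end{equation*}
where $k_a=|\mathcal{N}_a|$ and $k_b=|\mathcal{N}_b|$ are evaluated in the pre-closure state $\mathbf{A}$. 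The whole theorem reduces to this increment formula.

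To establish it I would enumerate the new $C_3$ and $P_4$ subgraphs that appear when $(a,b)$ switches from open to closed, and set up a bijection with the Cartesian product $\mathcal{N}_a\times\mathcal{N}_b$, partitioned by whether the pair $(c,d)$ has $c=d$ or $c\neq d$. Pairs with $c=d$ correspond to common neighbours and give new triangles on $\{a,b,c\}$, contributing $|\mathcal{N}_a\cap\mathcal{N}_b|$ to $\Delta g$. Pairs with $c\neq d$ give new paths $c$--$a$--$b$--$d$ with $(a,b)$ as the middle edge, contributing $k_ak_b-|\mathcal{N}_a\cap\mathcal{N}_b|$. Summing yields exactly $k_ak_b$, completing the detailed balance check and, via Theorem~\ref{reversible}, the theorem.

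The main obstacle will be the prerequisite lemma that the closed-form expression~\eqref{eq:POD} really does compute $\#C_3 + \#P_4$ on the closed-edge subgraph, because only then does the bijection above legitimately quantify $\Delta g$. The triangle count $\tfrac{1}{6}\mathrm{tr}(\mathbf{A}^3)$ is standard. For the $P_4$ piece I would prove
\begin{equation*}
(\mathbf{A}^3)_{i,j} - \mathbf{A}_{i,j}\Bigl((\mathbf{A}^2)_{i,i} + \textstyle\sum_{k\neq i,j}\mathbf{A}_{k,j}\Bigr) \;=\; \#\{P_4 \text{ with endpoints }i,j\}
\end{equation*}
by classifying the length-$3$ walks counted by $(\mathbf{A}^3)_{i,j}$ into proper paths and the three degenerate families $i{\to}j{\to}i{\to}j$, $i{\to}v{\to}i{\to}j$, and $i{\to}j{\to}v{\to}j$ (each of which requires $\mathbf{A}_{i,j}=1$), whose total count $\mathbf{A}_{i,j}(k_i+k_j-1)$ is exactly what the correction term removes. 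This is the content of Appendix~\ref{sec:gpod}; once established, the detailed balance verification becomes a single substitution and Theorem~\ref{reversible} simultaneously delivers reversibility and the form of the equilibrium distribution.
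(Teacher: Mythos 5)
Your strategy is the same as the paper's: Theorem~\ref{thm:POD} is proved there by the detailed-balance template of Theorem~\ref{thm:SUD}, with the asserted increment $g(E(T^+_{a,b}\mathbf{A})) = g(E(\mathbf{A})) + k_ak_b$, and Appendix~\ref{sec:gpod} carries out exactly the walk-versus-path correction you describe (your count $\mathbf{A}_{i,j}(k_i+k_j-1)$ of degenerate length-$3$ walks matches it). Your splitting of $\mathcal{N}_a\times\mathcal{N}_b$ into $c=d$ (new triangles) and $c\neq d$ (new paths with $(a,b)$ as middle edge) is in fact more explicit than the paper's one-line justification of the increment.

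The gap is in the claimed bijection, and it is a genuine one (shared by the paper's own proof sketch, which you have faithfully reproduced). Closing $(a,b)$ creates, besides the subgraphs you enumerate, every $P_4$ in which $(a,b)$ is an \emph{end} edge: if $(a,c)$ and $(c,e)$ are closed with $e\notin\{a,b\}$, then $b$--$a$--$c$--$e$ is a new path of length $3$, and it corresponds to no pair in $\mathcal{N}_a\times\mathcal{N}_b$. Hence in general $g(E(T^+_{a,b}\mathbf{A}))-g(E(\mathbf{A}))>k_ak_b$, and the balance equation \eqref{eq:detailedbalance12} fails for $\gamma'\neq 1$. Concretely, let the maximal network be the path $1$--$2$--$3$--$4$ and let $\mathbf{A}$ have $(2,3)$ and $(3,4)$ closed; closing $(1,2)$ has $k_1k_2=0$, yet it creates the $P_4$ subgraph $1$--$2$--$3$--$4$, so $\Delta g = 1 \neq k_1k_2$. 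Moreover, no redefinition of $g$ can make the increment equal $k_ak_b$: building $\{(1,2),(2,3),(3,4)\}$ from the empty state in the order $(1,2),(2,3),(3,4)$ accumulates total exponent $0$, while the order $(1,2),(3,4),(2,3)$ accumulates $1$; equivalently, the Kolmogorov cycle criterion fails, so the POD-DBP chain with closure rates $\lambda'\gamma'^{k_ak_b}$ is not reversible for $\gamma'\neq 1$, and \eqref{eq:podeq} with $g=\#C_3+\#P_4$ cannot satisfy detailed balance. Your appendix-level counting of $P_4$ subgraphs via $(\mathbf{A}^3)_{i,j}$ minus the degenerate walks is correct; the failure is confined to the central increment identity, which is precisely the step your proof (and the paper's) rests on.
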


In the POD model, the sufficient statistics are the total number of closed edges and the total number of $C_3$ and $P_4$ subgraphs induced by the closed edges. The POD model and the SUD model do not have the same sufficient statistics. The proof for Theorem~\ref{thm:POD} follows the same steps as the proof for Theorem~\ref{thm:SUD} except 1) the transition rate $q(\mathbf{A}, T^+_{a,b}\mathbf{A})$ for POD-DBP is given by \eqref{eq:PODform}, and 2) the number of paths of length 3 from any node in $\mathcal{N}_a$ to any node in $\mathcal{N}_b$ through edge $(a,b)$ is $|\mathcal{N}_a||\mathcal{N}_b|$. Therefore $g(E(T^+_{a,b}\mathbf{A})) = k_ak_b + g(E(\mathbf{A}))$.

%%%%%%%%%%%%%%%%%%%%%%%%%%%%%%%%%%%%%%%%%%
%%%%%%%%%%%%%%%%%%%%%%%%%%%%%%%%%%%%%%%%%%%%%
%%%%%%%%%%%%%%Most Probable Configuration and Regime II)
\section{Critical Structures and the Most-Probable Network Problem}\label{sec:mostprobable}

In the second part of the paper, we will use DBP to study critical structures in the maximal networks. Assuming that DBP models cascading failures of transmission lines in the power grid, we are interested in understanding the interaction between the underlying power grid structure and the closure (i.e., failure) and opening (i.e., recovery) rates. Alternatively, DBP may model formation of social ties between individuals; the critical structures are then relationships that are more likely to be formed and maintained.

The most-probable network state is the network state in $\mathcal{A}$ with the highest equilibrium probability:

\begin{equation}\label{eq:astar}
\mathbf{A}^* = \arg \max_{\mathbf{A} \in \mathcal{A}} \pi(\mathbf{A}) = \arg \max_{\mathbf{A} \in \mathcal{A}}  \left(\frac{\lambda'}{\mu'}\right)^{|E(\mathbf{A})|}\gamma'^{g(E(\mathbf{A}))}.
\end{equation}

Once the DBP process has reached equilibrium, it is the network state most likely to be observed. We call \eqref{eq:astar} the Most-Probable Network Problem and $\mathbf{A}^*$ the \emph{most-probable network}. By evaluating $\mathbf{A}^*$, we see that some edges are more prone to closure than others depending on the dynamic parameters, $\lambda', \mu', \gamma'$ and the maximal network topology.

%We will see that the solution of \eqref{eq:astar} naturally characterizes how DBP depends on dynamics parameters and structure. the set of at-risk edges increases with high closure (i.e., failure) rate and decreases with high opening rate.

Further, the Most-Probable Network Problem does not depend on finding the partition function, $Z$, \eqref{eq:partition}. This means we do not have to sum over $2^{|E_{\max}|}$ configurations. We will prove in later section that the most-probable network configuration for SUD-DBP, TRI-DBP, and POD-DBP can be found using polynomial-time algorithms. We can partition the parameter space of DBP into four regimes and determine in each regime the most-probable network.
\begin{description}
\item[Regime I)]Recovery Dominant: $0< \frac{\lambda'}{\mu'} \leq 1, 0 < \gamma'\leq 1$
\item[Regime II)]Cascading Failure:  $0 < \frac{\lambda'}{\mu'} \leq 1, \gamma'> 1$
\item[Regime III)] Cascading Prevention: $\frac{\lambda'}{\mu'} > 1, 0 < \gamma'\leq 1$
\item[Regime IV)] Removal Dominant: $\frac{\lambda'}{\mu'} > 1,  \gamma'>1$.
\end{description}
We are especially interested in contrasting the most-probable networks of SUD-DBP, TRI-DBP, and POD-DBP models in the different parameter regimes. This will give us insight in how the cascade function $f(\mathcal{N}_a, \mathcal{N}_b)$ affects the dynamic process.

\section{Regime I) Recovery Dominant and Regime IV) Removal Dominant}\label{sec:regimeI}

In Regime I) Recovery Dominant, the structure-free and the structure dependent terms are both decreasing exponential functions of the number of closed edges. Therefore, the most-probable network for SUD-DBP, TRI-DBP, and POD-DBP is the network with no edges, $\mathbf{A}_{0} = \{\mathbf{A} \in \mathcal{A} : |E(\mathbf{A})| = 0 \}$; for this regime, the opening rate is high enough that the most-probable network is a consensus state such that none of the edges in the maximal network are considered at-risk.

In Regime IV), the topology dependent and the topology independent terms are both increasing exponential functions.  Therefore, the most-probable network for SUD-DBP, TRI-DBP, and POD-DBP is $\mathbf{A}_{\max} = \{\mathbf{A} \in \mathcal{A} : |E(\mathbf{A})| = E_{\max} \}$; for this regime, the closure rate is high enough that the most-probable network is a consensus state such that all of the edges in the maximal network are considered at-risk.

The most-probable network for SUD-DBP, TRI-DBP, and POD-DBP are the same in Regime I) and Regime IV). We will show in the next section that it may be different for these models in Regime II) and Regime III) as in these regimes, there is competition between the structure-free term and the structure-dependent term .

\section{Regime II) Cascading Failure}
In Regime II) Cascading Failure: $0 < \frac{\lambda'}{\mu'} \leq 1, \gamma'> 1$, the structure-free term is driven by edge opening and the structure-dependent term is driven by edge closure. For SUD-DBP, TRI-DBP, and POD-DBP, we expect the solution space of the Most-Probable Network Problem \eqref{eq:astar} to exhibit phase transition depending on if edge opening or edge closure dominates. From the analysis of Regime I) and Regime IV), we expect that when the closure process dominates, the most-probable network of SUD-DBP, TRI-DBP, and POD-DBP will be driven toward $\mathbf{A}_{\max}$; whereas if the opening process dominates, the most-probable network for both models will be driven toward $\mathbf{A}_{0}$. 

Unlike Regime I) and IV), there may be solutions to the Most-Probable Network Problem that are neither $\mathbf{A}_{0}$ nor $\mathbf{A}_{\max}$. We call these solutions the \emph{non-degenerate} most-probable networks. The existence of these solutions means that subset of edges in the maximal network are more at-risk of closure (i.e., failure) than other edges during cascading failures.

%
%We observe that in Regime II), unlike Regime I) and Regime IV), the solution of the Most-Probable Network Problem may be non-degenerate. This means that subset of edges in the maximal network are more at-risk than others. We can see in Figure~\ref{fig:regimeIISUD} and Figure~\ref{fig:regimeIIPOD} that these non-degenerate most-probable networks also different in SUD-DBP and POD-DBP. This means that \emph{different} networks structures are of interest depending on the diffusion functions \eqref{eq:SUDform} and \eqref{eq:PODform} .  

To find these non-degenerate solutions, we have to solve the Most-Probable Network Problem \eqref{eq:astar}, a combinatorial optimization problem. In general, such computation is NP-hard \cite{boros2002pseudo}. For SUD-BDP, POD-BDP, and TRI-BDP however, the Most-Probable Network Problem can be solved exactly using polynomial-time algorithm using submodularity \cite{grotschel1981ellipsoid}.

%But it was proved in \cite{grotschel1981ellipsoid}, that the minimization of submodular functions can be solved in polynomial time. For both SUD-BDP and POD-BDP, the Most-Probable Network Problem can be transformed to an equivalent submodular minimization problem in Regime II). The proof is outlined below. 

Recall the definition of a submodular function:

\begin{definition}[\cite{billionnet1985maximizing}]\label{def:submodular}
A set function, $h: \mathcal{P}(E) \to \mathcal{R}$, is submodular if and only if for any $E(\mathbf{A}_1), E(\mathbf{A}_2) \subseteq E$ with $E(\mathbf{A}_2) \subseteq E(\mathbf{A}_1), i \not \in E \setminus E(\mathbf{A}_1)$:
\begin{equation}\label{eq:satsub}
h(E(\mathbf{A}_1) \cup \{i\}) - h(E(\mathbf{A}_1)) \leq h(E(\mathbf{A}_2) \cup \{i\}) - h(E(\mathbf{A}_2)).
\end{equation}
\end{definition}

Submodular functions are closed under nonnegative linear combination \cite{schrijver2003combinatorial}. Consider the function
\[
h(E) = \sum_{i=1}^M \alpha_i f_i(E).
\]
If $\alpha_i \ge 0 \, \forall i =1,\ldots, M$ and the functions $f_i(E)$ are submodular, then $h(E)$ is also submodular.

%We will show that finding the most-probable network in regime II) for both SUD-DBP and POD-DBP is equivalent to solving for the minimization of a submodular function. Reference \cite{grotschel1981ellipsoid} proved that the minimization of submodular functions can be solved in polynomial time.

First, we need the following lemma:

\begin{lemma}\label{lemma:networksubmodular}
Consider two sets of closed edges, $E(\mathbf{A}_1), E(\mathbf{A}_2) \subseteq E_{\max}$ and an additional closed edge $ i \in E_{\max} \setminus \{E(\mathbf{A}_1) \cup E(\mathbf{A}_2)\}$. For a given maximal network, the number of subgraphs induced by the edges in $E(\mathbf{A}_1)$ and $E(\mathbf{A}_2)$ are $g(E(\mathbf{A}_1)) \ge 0$ and $g(E(\mathbf{A}_1) \ge 0$, respectively. Let the number of subgraphs induced by the edges $E(\mathbf{A}_1) \cup \{i\} = g(E(\mathbf{A}_1)) + m_1$ and the edges $E(\mathbf{A}_2) \cup \{i\} = g(E(\mathbf{A}_2)) + m_2$; therefore $m_1 \ge 0 $ is the number of additional induced subgraphs created with the inclusion of edge $i$ in $E(\mathbf{A}_1)$ and $m_2 \ge 0$ is the number of additional induced subgraphs created with the inclusion of edge $i$ in $E(\mathbf{A}_2)$. If $E(\mathbf{A}_2) \subseteq E(\mathbf{A}_1)$, then:

\begin{enumerate}
\item $g(E(\mathbf{A}_1)) \ge g(E(\mathbf{A}_2)) \ge 0$,
\item $m_1 \ge  m_2 \ge 0$.
\end{enumerate}
\end{lemma}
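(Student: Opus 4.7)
My plan is to prove both claims uniformly by representing $g(E)$ as a sum of indicator functions indexed by subgraph ``witnesses'' in the maximal network. In each of the three DBP variants, $g(E)$ counts induced subgraphs of a fixed family (namely $P_3$ for SUD, $C_3$ for TRI, and the union $C_3 \cup P_4$ for POD); because each such subgraph is determined by its edge set, I let $\mathcal{H}$ denote the collection of all edge subsets of $E_{\max}$ forming a subgraph of the target type, and write
\[
g(E) \;=\; \sum_{H \in \mathcal{H}} \mathds{1}[H \subseteq E].
\]
This representation reduces Lemma~\ref{lemma:networksubmodular} to a general monotonicity / supermodularity statement about indicator sums, which I would then specialize.

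Statement 1 is immediate from this representation: each indicator $\mathds{1}[H \subseteq E]$ is monotone non-decreasing in $E$, so $E(\mathbf{A}_2) \subseteq E(\mathbf{A}_1)$ yields $0 \le g(E(\mathbf{A}_2)) \le g(E(\mathbf{A}_1))$. For statement 2, I would compute the increment
\[
m_\ell \;=\; g\bigl(E(\mathbf{A}_\ell) \cup \{i\}\bigr) - g\bigl(E(\mathbf{A}_\ell)\bigr) \;=\; \bigl|\{H \in \mathcal{H} : i \in H,\; H \setminus \{i\} \subseteq E(\mathbf{A}_\ell)\}\bigr|,
\]
since only witnesses $H$ containing $i$ are activated by inserting $i$, and such a witness activates exactly when its remaining edges already lie in $E(\mathbf{A}_\ell)$. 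Because $E(\mathbf{A}_2) \subseteq E(\mathbf{A}_1)$, any $H$ counted by $m_2$ is also counted by $m_1$, giving $m_2 \le m_1$, and both are cardinalities of finite sets, hence non-negative.

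Finally, I would verify that this generic witness count reproduces the model-specific numbers already used in the proofs of Theorems~\ref{thm:SUD}, \ref{thm:TRI}, and \ref{thm:POD}: for SUD the witnesses are the $k_a + k_b$ edges of $E(\mathbf{A}_\ell)$ incident to an endpoint of $i = (a,b)$; for TRI they are the $|\mathcal{N}_a \cap \mathcal{N}_b|$ common neighbors closing a triangle through $i$; and for POD they yield the $k_a k_b$ length-$3$ walks through $i$, which, depending on whether the two far endpoints coincide, become new triangles or new $P_4$s. The one place that needs a little extra care is the POD case, where $g$ is a sum of two subgraph counts linked through the inclusion-exclusion correction in~\eqref{eq:POD}; the cleanest route is to observe that $\#C_3$ and $\#P_4$ each individually admit the indicator-sum representation, so monotonicity and supermodularity pass through their non-negative sum. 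The hard part is really just bookkeeping, not any deep combinatorics: once the ``witness'' viewpoint is in place, Lemma~\ref{lemma:networksubmodular} becomes the specialization to three concrete subgraph families of the fact that induced-subgraph counting is monotone and supermodular on the edge lattice.
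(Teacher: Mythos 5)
Your proof is correct and takes essentially the same route as the paper's: the paper's argument is exactly the monotonicity observation that a subset of closed edges cannot induce more subgraphs, and that adjoining edge $i$ to the smaller set activates no more new subgraphs than adjoining it to the larger set, which your witness/indicator-sum representation of $g$ simply makes rigorous. One tangential caveat: in your final consistency check for POD-DBP, the witnesses containing $i=(a,b)$ that get activated are all $C_3$ and $P_4$ subgraphs through $(a,b)$, including $P_4$s in which $(a,b)$ is an \emph{end} edge, so they are not exhausted by the $k_a k_b$ length-3 walks with $(a,b)$ in the middle; this does not affect the lemma, whose proof only needs monotonicity of the witness counts.
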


\begin{proof}\label{proof:lemmasubmodular}
\begin{enumerate}
\item When $E(\mathbf{A}_2) \subset E(\mathbf{A}_1)$, edges in $E(\mathbf{A}_2)$ can not induce more subgraphs  than edges in $E(\mathbf{A}_1)$. When $E(\mathbf{A}_2) = E(\mathbf{A}_1)$, then the edges in $E(\mathbf{A}_1)$ and $E(\mathbf{A}_2)$ will induce the same number of subgraphs. Hence, $g(E(\mathbf{A}_1)) \geq g(E(\mathbf{A}_2)) \ge 0$.

\item Every edge in $E(\mathbf{A}_2)$ is also an edge in $E(\mathbf{A}_1)$. Therefore, adding edge $i$ to $E(\mathbf{A}_2)$ will generate the same or less number of subgraphs as adding edge $i$ to $E(\mathbf{A}_1)$. Hence, $m_1 \ge m_2 \ge 0$.

\end{enumerate}
\end{proof}

%%%%%%%%%%%%%%%

Using Lemma~\ref{lemma:networksubmodular}, we can prove that

\begin{theorem}\label{theorem:networksubmodular}
In Regime II), for SUD-DBP, TRI-DBP, and POD-DBP, the function  
\[
-\log(Z\pi(\mathbf{A})) = -|E(\mathbf{A})| \log \left(\frac{\lambda'}{\mu'} \right) -  g(E(\mathbf{A})) \log(\gamma').
\]
is submodular. The most-probable network, 
\[
\mathbf{A}^* = \arg \max \pi(\mathbf{A}) = \arg \min -\log(Z\pi(\mathbf{A})), 
\]
is the minimum of a submodular function and can therefore be computed in polynomial-time \cite{grotschel1981ellipsoid} .

%If $\frac{\lambda'}{\mu'}$ and $\gamma'$ are in Regime II) $0< \frac{\lambda'}{\mu'} \leq 1, \gamma'> 1$, then $-\log(h(E(\mathbf{A})))$ is a submodular function, where
%\[
%-\log(h(E(\mathbf{A}))) = -|E(\mathbf{A})| \log \left(\frac{\lambda'}{\mu'} \right) -  g(E(\mathbf{A})) \log(\gamma').
%\]
\end{theorem}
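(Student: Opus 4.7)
The plan is to regard $h(E(\mathbf{A})) := -\log(Z\pi(\mathbf{A}))$ as a set function on $2^{E_{\max}}$ and verify that $h$ is submodular in Regime II, after which the polynomial-time minimization claim follows from the Grötschel--Lovász--Schrijver algorithm for unconstrained submodular minimization. Identifying each network state with its set of closed edges via $\mathbf{A} \leftrightarrow E(\mathbf{A}) \subseteq E_{\max}$, the target function is
\[
h(E(\mathbf{A})) = -|E(\mathbf{A})|\log\!\left(\frac{\lambda'}{\mu'}\right) - g(E(\mathbf{A}))\log(\gamma').
\]
In Regime II the sign structure is crucial: $0 < \lambda'/\mu' \le 1$ gives $-\log(\lambda'/\mu') \ge 0$, while $\gamma' > 1$ gives $-\log(\gamma') < 0$. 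So the first term has a nonnegative coefficient, but the second has a strictly negative one.

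First I would handle the edge-count term: $E(\mathbf{A}) \mapsto |E(\mathbf{A})|$ is modular and hence submodular, and scaling by the nonnegative constant $-\log(\lambda'/\mu')$ preserves submodularity. The delicate piece is the $g$ term, whose coefficient is negative; for its contribution to $h$ to be submodular I therefore need $g$ itself to be \emph{supermodular}, i.e.\ to exhibit nondecreasing marginal returns
\[
g(E(\mathbf{A}_1)\cup\{i\}) - g(E(\mathbf{A}_1)) \ \ge\ g(E(\mathbf{A}_2)\cup\{i\}) - g(E(\mathbf{A}_2))
\]
whenever $E(\mathbf{A}_2)\subseteq E(\mathbf{A}_1)\subseteq E_{\max}$ and $i\in E_{\max}\setminus E(\mathbf{A}_1)$. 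In the notation of Lemma~\ref{lemma:networksubmodular} this is exactly the inequality $m_1 \ge m_2$, applied uniformly to the subgraph counts $g$ for SUD-DBP ($P_3$'s), TRI-DBP ($C_3$'s), and POD-DBP ($C_3$'s together with $P_4$'s): any newly induced copy of the relevant subgraph that appears when $i$ is added to the smaller set $E(\mathbf{A}_2)$ uses only edges of $E(\mathbf{A}_2)\cup\{i\}\subseteq E(\mathbf{A}_1)\cup\{i\}$ and hence also appears when $i$ is added to $E(\mathbf{A}_1)$.

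Combining the two pieces, $h$ is a nonnegative linear combination of a modular function and of the submodular function $-g$, hence submodular by the standard closure property noted just after Definition~\ref{def:submodular}. Since
\[
\mathbf{A}^* \;=\; \arg\max_{\mathbf{A}\in\mathcal{A}} \pi(\mathbf{A}) \;=\; \arg\min_{\mathbf{A}\in\mathcal{A}} h(E(\mathbf{A}))
\]
and $h$ does not involve the partition function $Z$ (an additive constant drops out), the ellipsoid-based submodular minimization algorithm of \cite{grotschel1981ellipsoid} produces $\mathbf{A}^*$ in time polynomial in $|E_{\max}|$.

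The main obstacle I expect is cleanly verifying the supermodularity of $g$ for all three models simultaneously. For SUD-DBP and TRI-DBP this is essentially immediate from the combinatorial fact that a new copy of a $P_3$ or $C_3$ containing $i$ is determined by $i$ together with one or two other edges already in the ambient set, so a marginal copy on $E(\mathbf{A}_2)$ is also a marginal copy on any $E(\mathbf{A}_1)\supseteq E(\mathbf{A}_2)$. For POD-DBP the monotonicity has to be checked for triangles and for $P_4$'s separately; the closed form in \eqref{eq:POD} is intricate enough that a direct algebraic check is unappealing, so I would prefer the combinatorial "every new copy containing $i$ survives in the larger set" argument, which invokes Lemma~\ref{lemma:networksubmodular}(2) once per subgraph type and then sums.
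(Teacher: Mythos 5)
Your proposal is correct and follows essentially the same route as the paper: it splits $-\log(Z\pi(\mathbf{A}))$ into the modular edge-count term and the term $-g(E(\mathbf{A}))\log(\gamma')$, uses the marginal-monotonicity $m_1 \ge m_2 \ge 0$ of Lemma~\ref{lemma:networksubmodular} (your ``$g$ is supermodular'' phrasing is the same inequality) together with the sign conditions of Regime II, and closes via nonnegative combination of submodular functions and the Gr\"otschel--Lov\'asz--Schrijver minimization result. No substantive difference from the paper's argument.
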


%%%%%%%%%%
\begin{proof}\label{proof:networksubmodular}
By the additive closure property of submodular functions, the function $-\log(Z\pi(\mathbf{A}))$ is submodular when
\[
f_1(E(\mathbf{A})) = -|E(\mathbf{A})| \log \left(\frac{\lambda'}{\mu'} \right)
\]
and
\[
f_2(E(\mathbf{A})) =  - g(E(\mathbf{A})) \log(\gamma')
\]
are submodular functions in regime II).

We need to show that $f_1(\mathbf{A})$ and $f_2(\mathbf{A})$ satisfy Definition~\ref{def:submodular} when $0< \frac{\lambda'}{\mu'} \le 1$ and $\gamma' > 1$.

Consider $E(\mathbf{A}_1), E(\mathbf{A}_2) \subseteq E$ with $E(\mathbf{A}_2) \subseteq E(\mathbf{A}_1), i \not \in E \setminus E(\mathbf{A}_1)$.

\begin{equation}\label{eq:proofsub1}
\begin{split}
&f_1(E(\mathbf{A}_1) \cup \{i\}) - f_1(E(\mathbf{A}_1) ) \\
&= -(|E(\mathbf{A}_1)| + 1)\log \left(\frac{\lambda'}{\mu'} \right) + |E(\mathbf{A}_1)|\log \left(\frac{\lambda'}{\mu'} \right) \\
&= -\log \left(\frac{\lambda'}{\mu'} \right).
\end{split}
\end{equation}

\begin{equation}\label{eq:proofsub2}
\begin{split}
&f_1(E(\mathbf{A}_2) \cup \{i\}) - f_1(E(\mathbf{A}_2) )\\ 
&= -(|E(\mathbf{A}_2)| + 1)\log \left(\frac{\lambda'}{\mu'} \right) + |E(\mathbf{A}_2)|\log \left(\frac{\lambda'}{\mu'} \right) \\
&= -\log \left(\frac{\lambda'}{\mu'} \right).
\end{split}
\end{equation}

Equations \eqref{eq:proofsub1} and~\eqref{eq:proofsub2} satisfy the submodular condition \eqref{eq:satsub} when $0< \frac{\lambda'}{\mu'} \le 1$. In fact, function $f_1(E(\mathbf{A}))$ is always submodular regardless of $\frac{\lambda'}{\mu'}$.

\begin{equation}\label{eq:proofsub3}
\begin{split}
&f_2(E(\mathbf{A}_1) \cup \{i\}) - f_2(E(\mathbf{A}_1) )\\ 
&= -(g(E(\mathbf{A}_1)) + m_1) \log(\gamma') + g(E(\mathbf{A}_1) \log(\gamma')\\ 
& = - m_1 \log(\gamma').
\end{split}
\end{equation}

\begin{equation}\label{eq:proofsub4}
\begin{split}
&f_2(E(\mathbf{A}_2) \cup \{i\}) - f_2(E(\mathbf{A}_2) )\\ 
&= -(g(E(\mathbf{A}_2)) + m_2) \log(\gamma') + g(E(\mathbf{A}_2) \log(\gamma')\\ 
& = - m_2 \log(\gamma').
\end{split}
\end{equation}

According to Lemma~\ref{lemma:networksubmodular}, $m_1 \ge m_2 \ge 0$. The function $f_2(E(\mathbf{A})$ satisfies the definition of submodularity when $\gamma' > 1$.

Furthermore, since Lemma~\ref{lemma:networksubmodular} is true for all subgraphs induced by $E(\mathbf{A})$. It is applicable for SUD-DBP, TR-DBP, and POD-DBP.

\end{proof}

%%%%%%%%%%%%%%%%%%%%%%%%%%%%%%%%%%%%
\subsection{Power Grid Example}

\begin{figure*}
\begin{subfigure}{0.3\linewidth}
\centering
\includegraphics[width=0.95\linewidth]{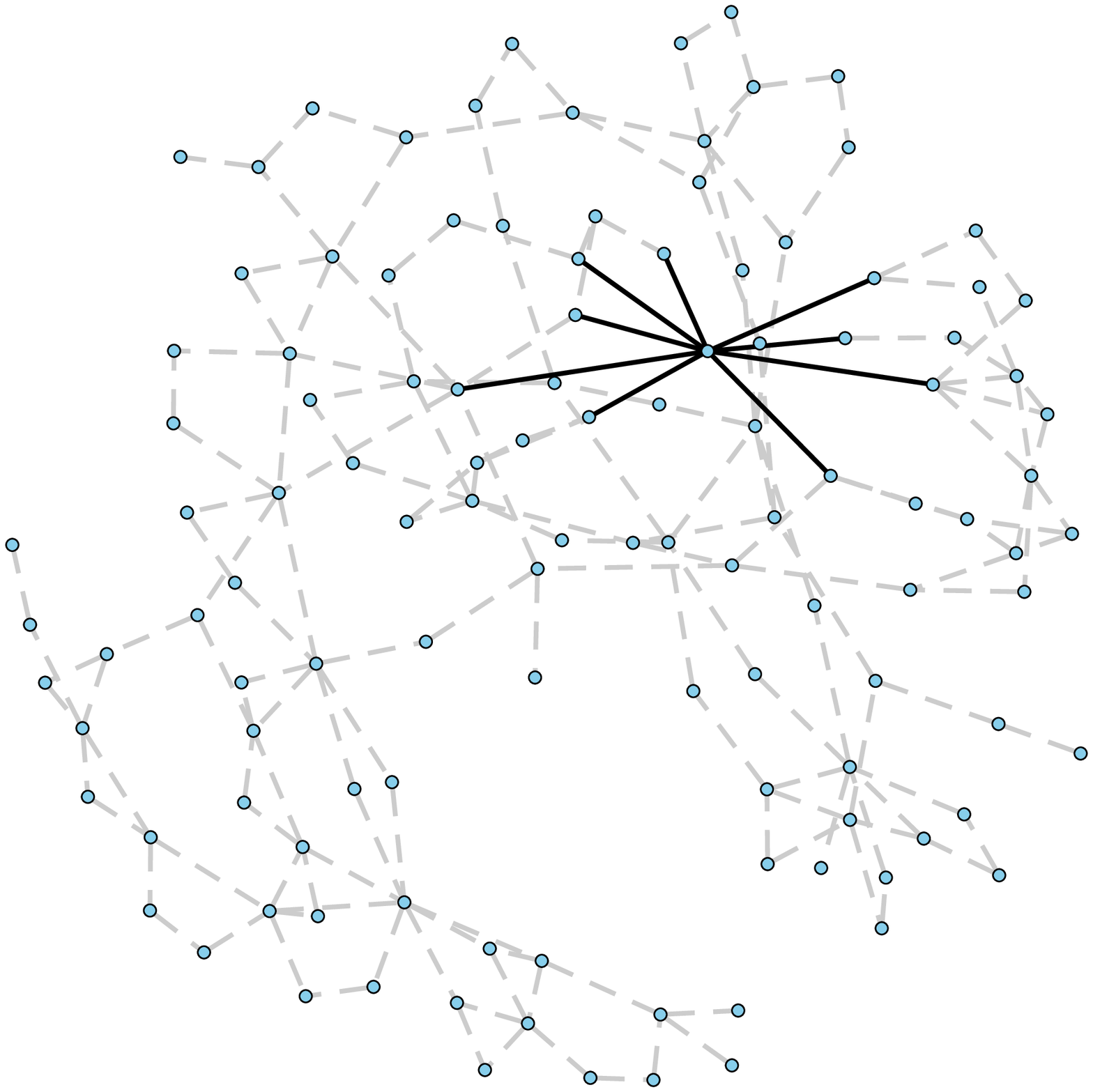}
 \captionsetup{format = hang, justification = centering}
\caption{SUD-DBP: $|E(\mathbf{A}^*)| = 9$ \\$\frac{\lambda'}{\mu'} = 0.0028, \gamma'= 4.4222$}
\label{fig:118_SUDresultMatrix_lau0.0027826_mu1_g4.4222_E9}
\end{subfigure}\hfill
\begin{subfigure}{0.3\linewidth}
\centering
\includegraphics[width=0.95\linewidth]{7a.eps}
 \captionsetup{format = hang, justification = centering}
 \caption{SUD-DBP: $|E(\mathbf{A}^*)| = 0$ \\$\frac{\lambda'}{\mu'} = 0.0167, \gamma'= 1.844$}
 \label{fig:118_A0}
\end{subfigure}\hfill
\begin{subfigure}{0.3\linewidth}
\centering
\includegraphics[width=0.95\linewidth]{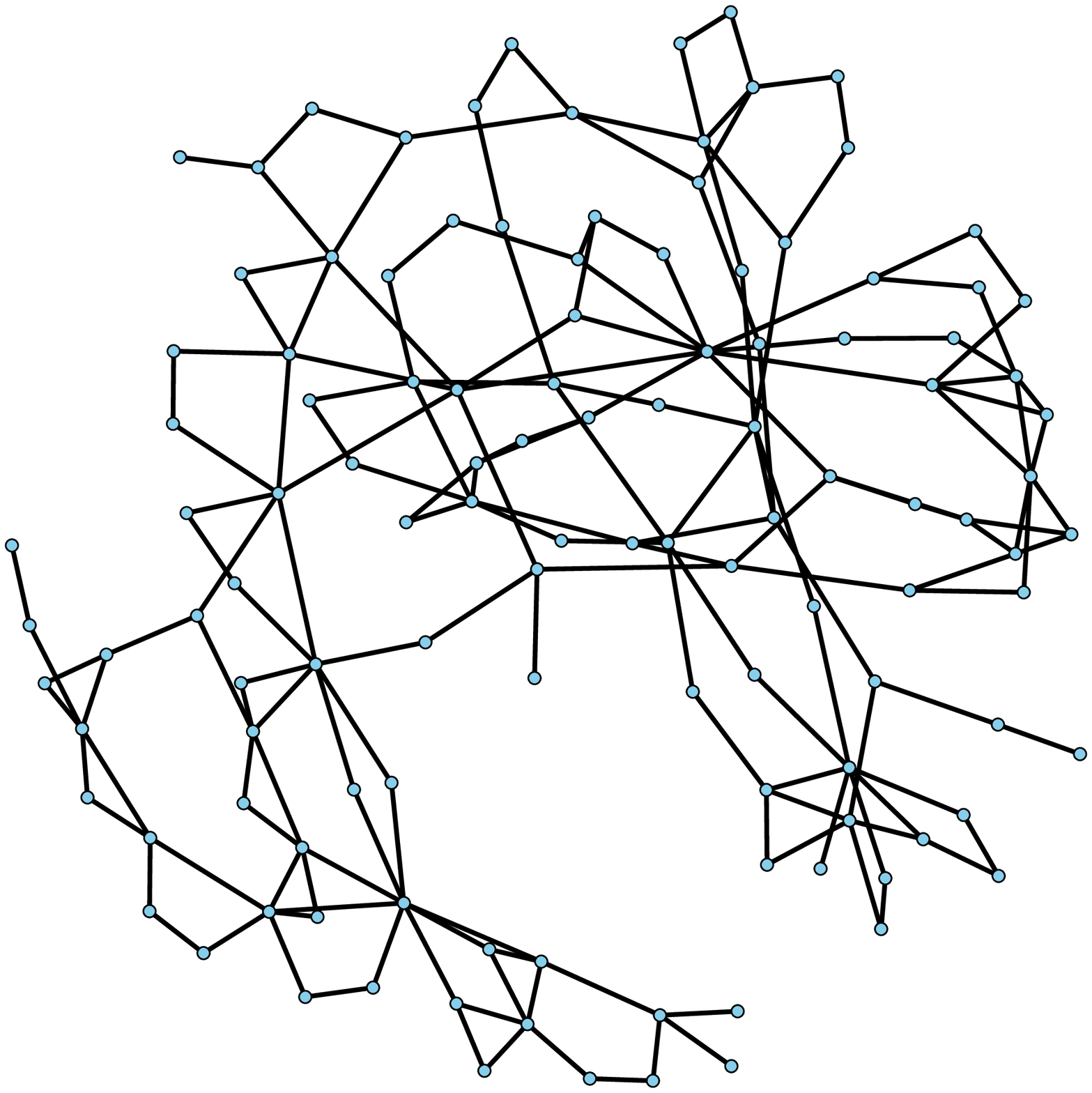}
\captionsetup{format = hang, justification = centering}
\caption{SUD-DBP: $|E(\mathbf{A}^*)| = 179$ \\$\frac{\lambda'}{\mu'} = 0.27542, \gamma'= 28.889$}
\label{fig:118_AN}
\end{subfigure}\hfill

%%%%%%%%%%%%
\begin{subfigure}{0.3\linewidth}
\centering
\includegraphics[width=0.95\linewidth]{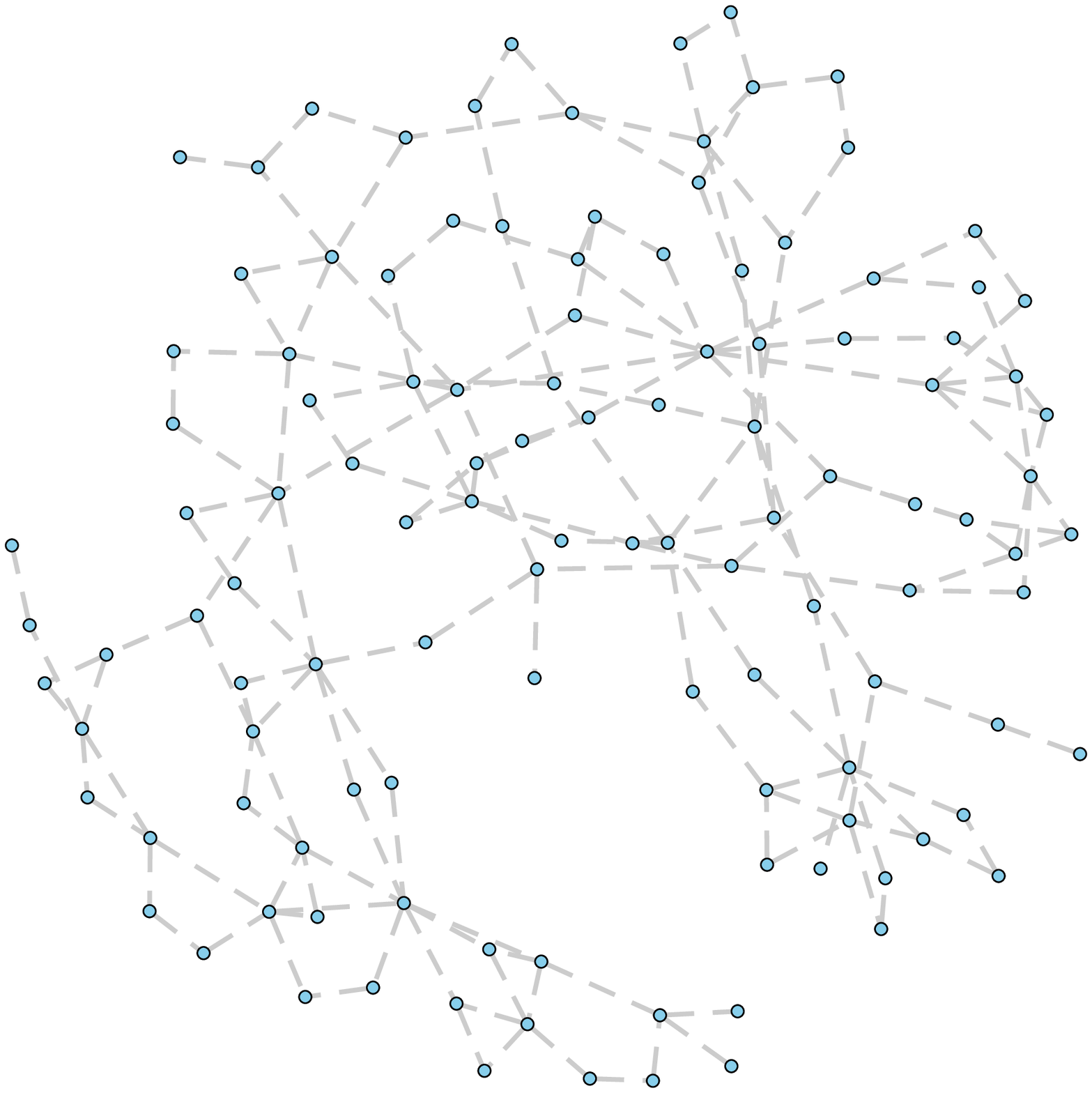}
                \captionsetup{format = hang, justification = centering}
                \caption{TRI-DBP: $|E(\mathbf{A}^*)| = 0$\\ $\frac{\lambda'}{\mu'} = 0.0028, \gamma'= 4.4222$}
                \label{fig:118_A02}
\end{subfigure}\hfill
\begin{subfigure}{0.3\linewidth}
\centering
\includegraphics[width=0.95\linewidth]{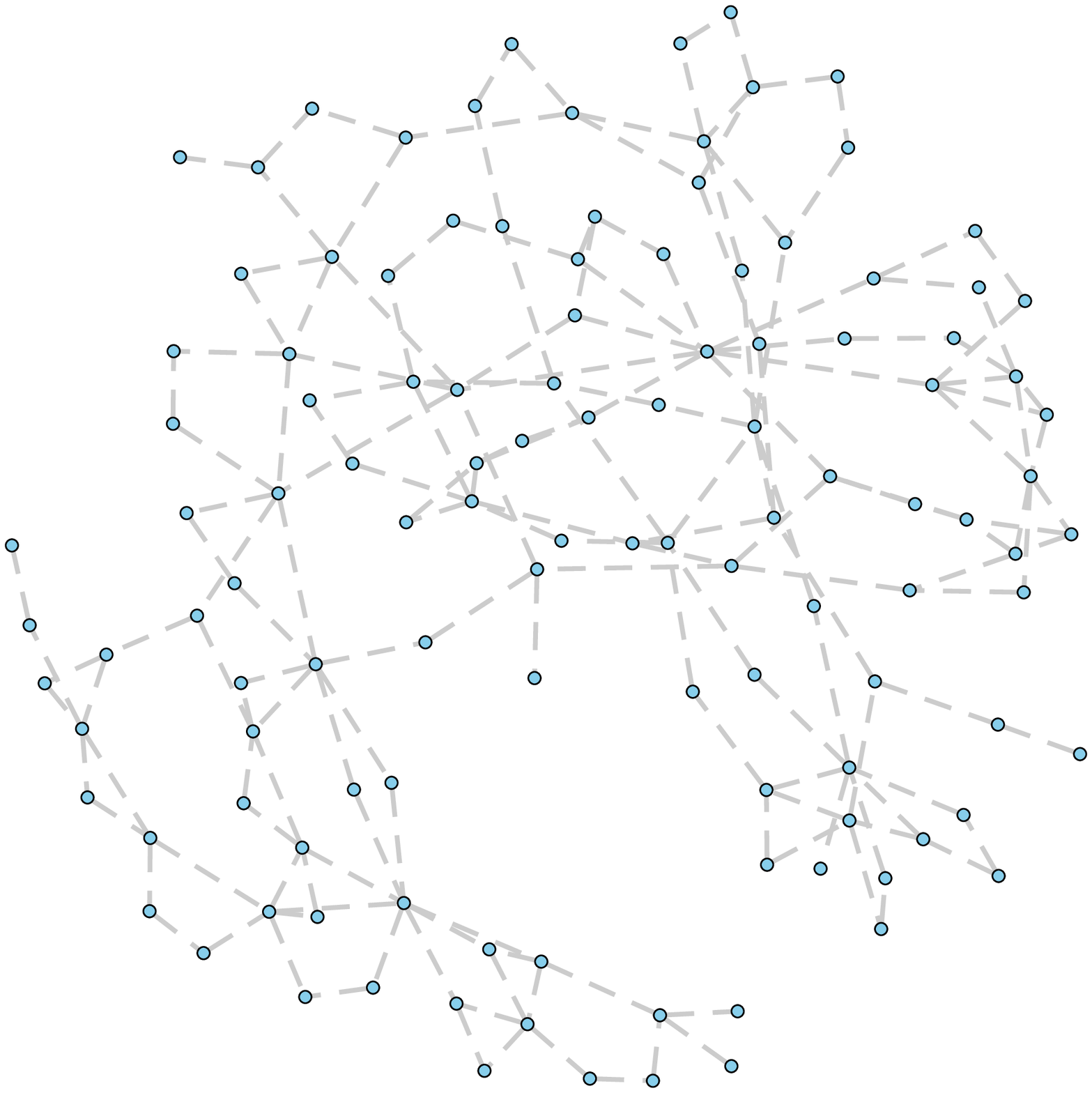}
                   \captionsetup{format = hang, justification = centering}
                \caption{TRI-DBP: $|E(\mathbf{A}^*)| = 0$\\  $\frac{\lambda'}{\mu'} = 0.0167, \gamma'= 1.844$}
                \label{fig:118_A03}
\end{subfigure}\hfill
\begin{subfigure}{0.3\linewidth}
\centering
\includegraphics[width=0.95\linewidth]{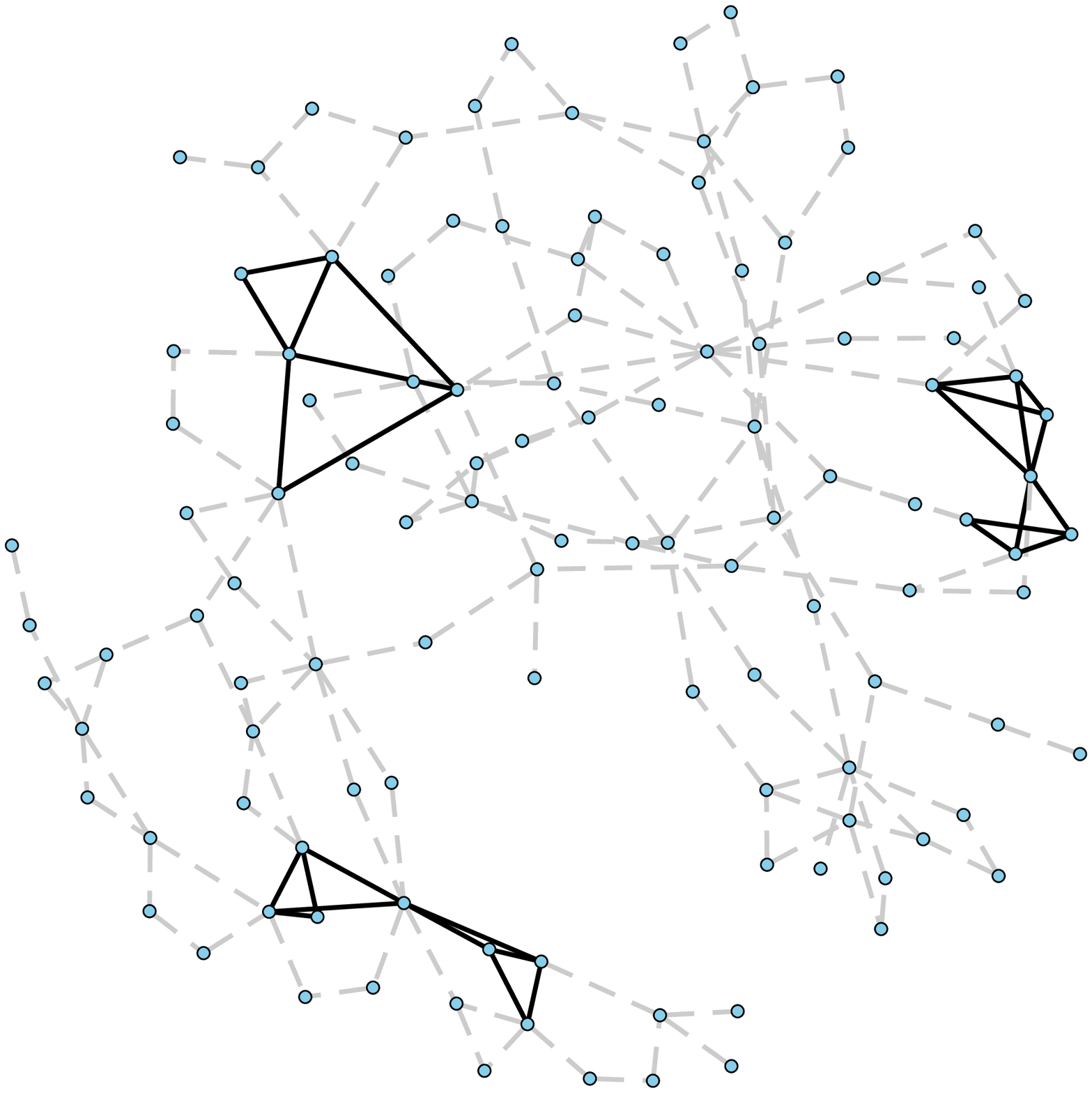}
                  \captionsetup{format = hang, justification = centering}
                \caption{TRI-DBP: $|E(\mathbf{A}^*)| = 28$\\ $\frac{\lambda'}{\mu'} = 0.27542, \gamma'= 28.889$}
                \label{fig:118_TRANresultMatrix_lau0.27542_mu1_g28.889_E28}
\end{subfigure}\hfill

%%%%%%%%%%%%
\begin{subfigure}{0.3\linewidth}
\centering
\includegraphics[width=0.95\linewidth]{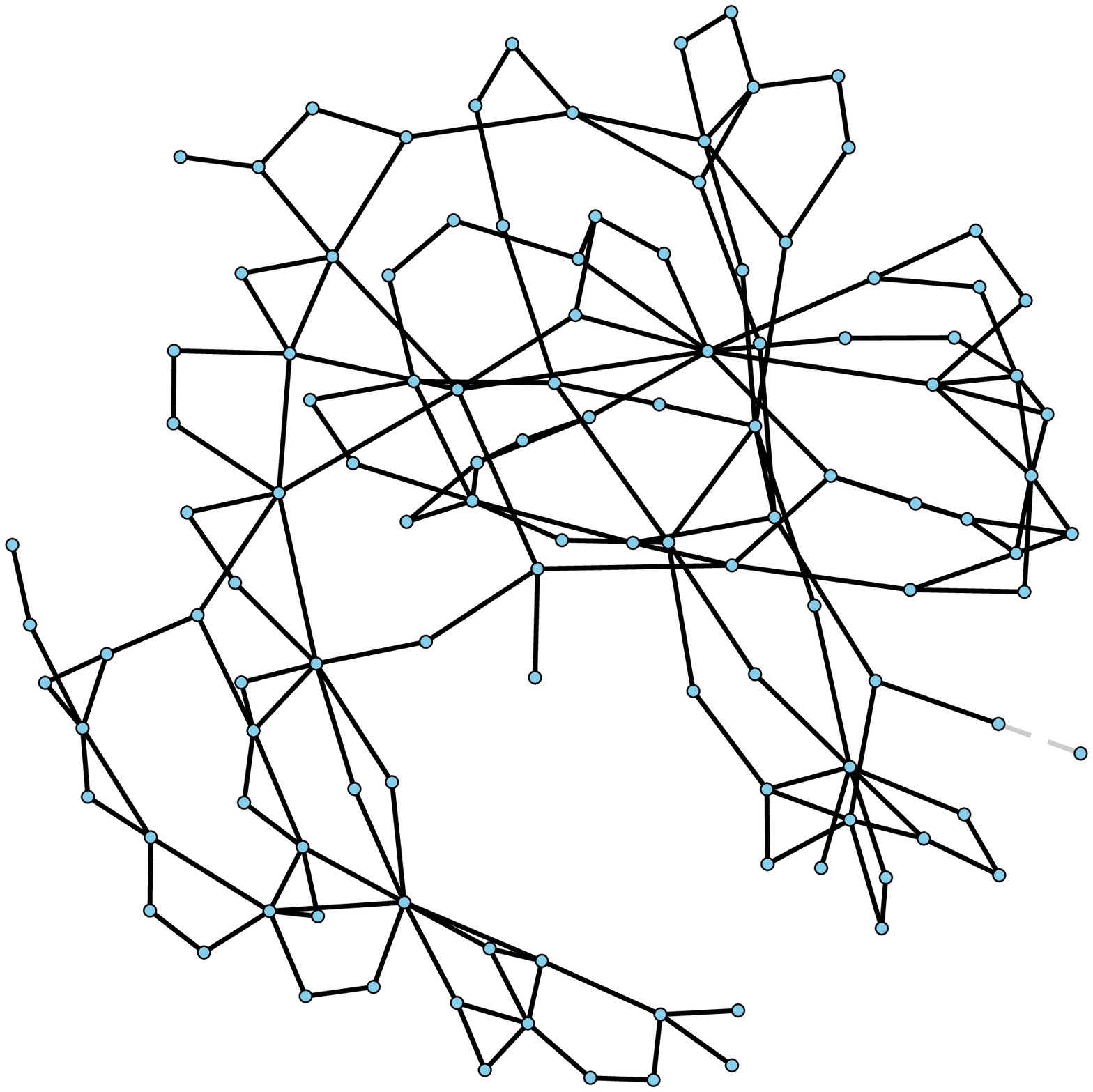}
                  \captionsetup{format = hang, justification = centering}
                \caption{POD-DBP: $|E(\mathbf{A}^*)| = 178$\\ $\frac{\lambda'}{\mu'} = 0.0028, \gamma'= 4.4222$}
                \label{fig:118_PODresultMatrix_lau0.0027826_mu1_g4.4222_E178}
\end{subfigure}\hfill
\begin{subfigure}{0.3\linewidth}
\centering
\includegraphics[width=0.95\linewidth]{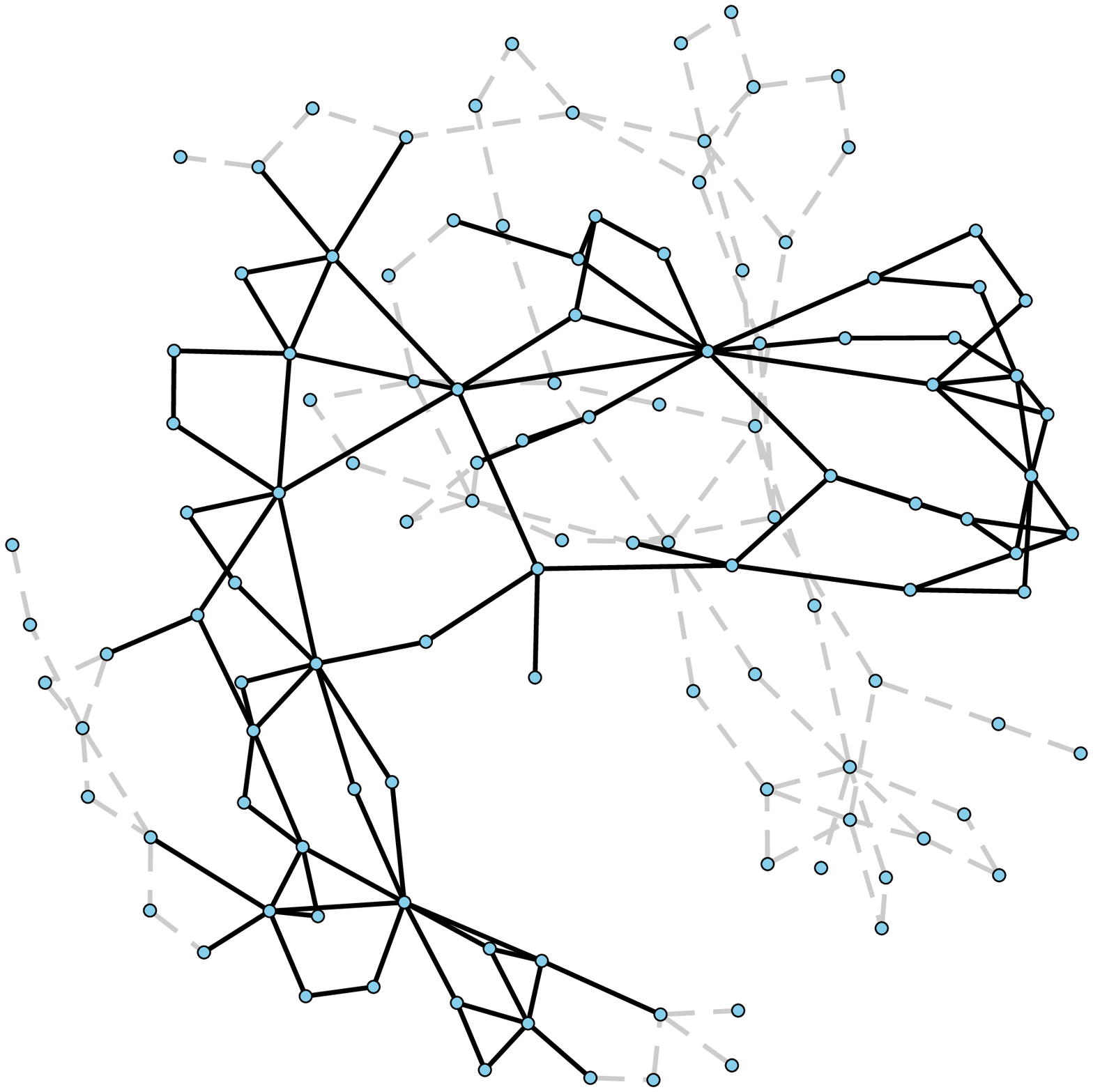}
                 \captionsetup{format = hang, justification = centering}
                \caption{POD-DBP: $|E(\mathbf{A}^*)| = 98$\\ $\frac{\lambda'}{\mu'} = 0.0167, \gamma'= 1.844$}
                \label{fig:118_PODresultMatrix_lau0.016681_mu1_g1.8444_E98}
\end{subfigure}\hfill
\begin{subfigure}{0.3\linewidth}
\centering
\includegraphics[width=0.95\linewidth]{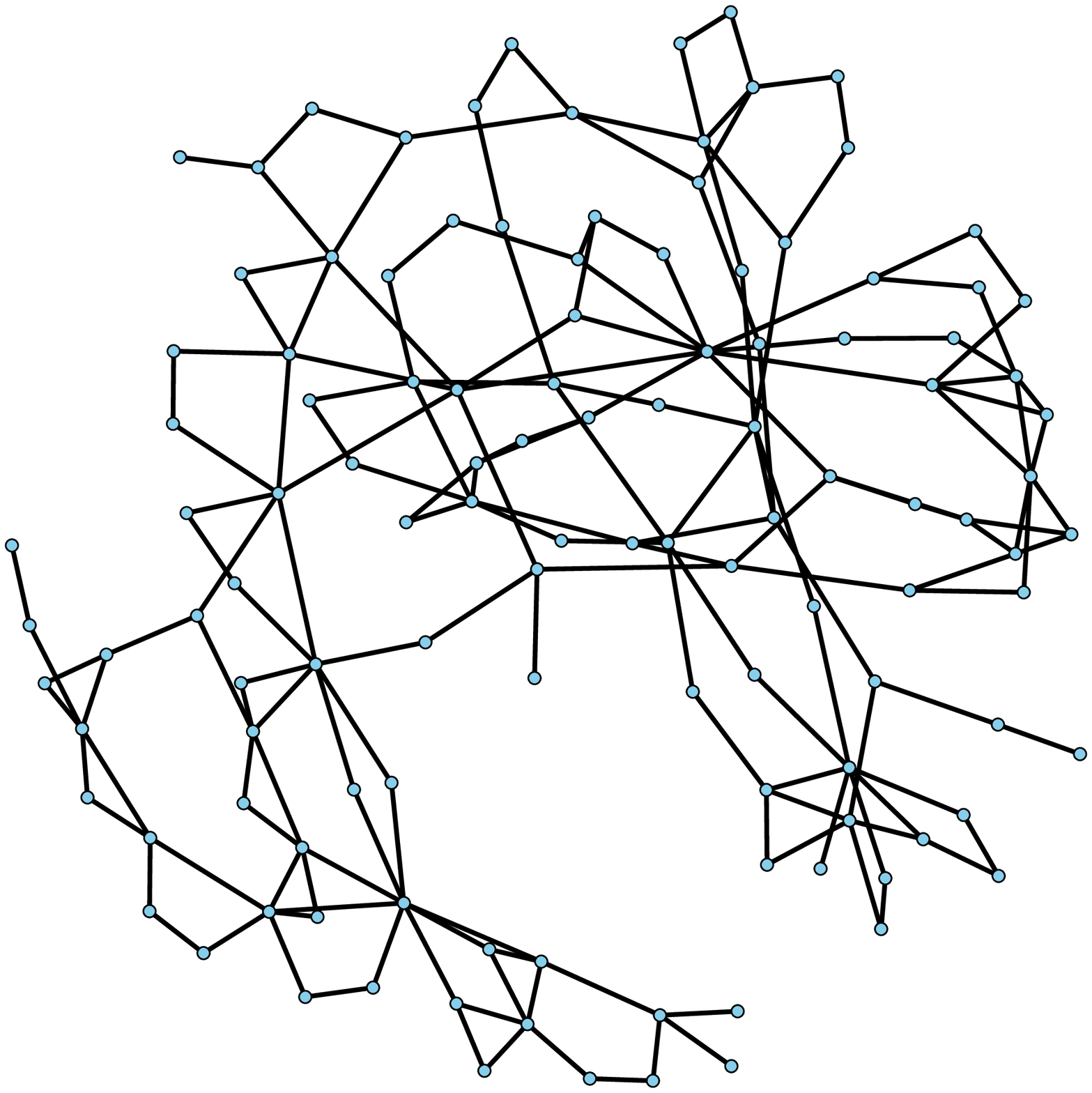}
                \captionsetup{format = hang, justification = centering}
                \caption{SUD-DBP: $|E(\mathbf{A}^*)| = 179$\\ $\frac{\lambda'}{\mu'} = 0.27542, \gamma'= 28.889$}
                \label{fig:118_AN2}
\end{subfigure}\hfill

 \caption{118-bus Test Case Most-Probable Network (solid edges = close, dashed edges = open)}\label{fig:118bus}
\end{figure*}

In this section, we use DBP to model cascading failures of transmission lines on the IEEE 118 Bus Test Case\cite{power:Online}. The network is a portion of the Midwestern US power grid from 1962. It remains one of the standard test cases today. We only use the network topology provided by the test case. The nodes in the network represent the buses and the edges represent transmission lines between the buses. We model edge closures as line failures and edge opening as recovery. During cascading failures like blackouts, transmission line $(a,b)$ is more likely to fail if there are already many failed transmission lines at bus (i.e., node) $a$ and bus $b$. Therefore, it is intuitive to assume that $\gamma' > 1$. On the other hand, spontaneous failures are rare events and since the power grid is well maintained, the recovery rate of failed transmission lines is relatively high; it is natural to assume that  $0 < \frac{\lambda'}{\mu'} \le 1$.

Figure~\ref{fig:118bus} shows the most-probable network with different dynamic parameters $\lambda', \gamma', \mu'$ for SUD-DBP, TRI-DBP, and POD-DBP. When the closure (i.e., failure) rate of an edge is low compared with the opening rate (i.e., recovery), $\mathbf{A}^* = \mathbf{A}_0$ as in Regime I). When the closure (i.e., failure) rate of an edge is high compared with the opening rate (i.e., recovery), $\mathbf{A}^* = \mathbf{A}_{\max}$ as in Regime IV). However, for a subset of parameters, the most-probable networks are \emph{non-degenerate}. This means that a subset of edges are more vulnerable to cascading failures than others. We can also see that the contagion dynamic, captured by the cascade function $f(\mathcal{N}_a, \mathcal{N}_b)$, has a large impact on the susceptibility of the network to failure.

The dynamics of SUD-DBP \eqref{eq:SUDform} put higher probability on network states that minimize the number of closed edges, $|E(\mathbf{A})|$ and maximize the number of induced $P_3$ subgraphs. We can see from Figures~\ref{fig:118_SUDresultMatrix_lau0.0027826_mu1_g4.4222_E9}, that the most-probable network of SUD-DBP consists of closed edge structures resembling many star subgraphs. Central, hub structures are more vulnerable under SUD-DBP dynamics.

The dynamics of TRI-DBP \eqref{eq:TRIform} put high probability on network states that minimize the number of closed edges, $|E(\mathbf{A})|$ but maximize the number of induced $C_3$ subgraphs. From Figure~\ref{fig:118_TRANresultMatrix_lau0.27542_mu1_g28.889_E28}, we can see the the most-probable network consists of triangles. Since the cascade function of TRI-DBP depends on the number of common neighbors, networks with few triangles have the lowest rate of cascading failures.

The dynamics of POD-DBP \eqref{eq:PODform} put high probability on network states that minimize the number of closed edges, $|E(\mathbf{A})|$ but maximize the number of induced $C_3$ and $P_4$ subgraphs. We can see from Figure~\ref{fig:118_PODresultMatrix_lau0.016681_mu1_g1.8444_E98} that the most-probable network consists of more triangles (i.e., $C_3$s) and longer paths than the most-probable network of SUD-DBP. POD-DBP has higher rate of cascade than SUD-DBP and TRI-DBP and therefore more lines are vulnerable to cascading failures for the same parameter values.

%Figure~\ref{fig:regimeIISUD}, Figure~\ref{fig:regimeIIPOD}, and Figure~\ref{} shows $\mathbf{A}^*$ for a variety of dynamics parameters, $\lambda', \gamma', \mu'$, for SUD-DBP, POD-DBP, and TRI-DBP, respectively. While the most-probable network may be $\mathbf{A}_0$ or $\mathbf{A}_{\max}$, we can see that in Regime II), the solution may be non-degenerate. This mean that some transmission lines in the example are more prone to cascading failures. 
%
%
%Furthermore, unlike Regime I) and Regime IV), for a given set of dynamics parameter, $\lambda', \gamma', \mu'$ , the solution of the Most-Probable Network Problem depends on the cascade function $f(\mathcal{N}_a, \mathcal{N}_b)$. This is intuitive because the structure-dependent term for SUD-DBP, POD-DBP, and TRI-DBP are different. We will see that this results in different critical structures. 

%Further, for the same parameter values, $\frac{\lambda'}{\gamma'}$ and $\gamma'$, the most-probable network of SUD-DBP has less number of closed edges than the most-probable network of POD-DBP. This is due to the fact that the cascading, network effect of the POD model is stronger because the edge closure rate \eqref{eq:PODform} is the product rather than the sum of the number of neighboring closed edges.

%%%%%%%%%%%%%%%%%%%%%%%%%%%%%%%%%%%%%%%%%%%%
\subsection{Social Network Example}

\begin{figure*}
\begin{subfigure}{0.3\linewidth}
\centering
\includegraphics[width=0.95\linewidth]{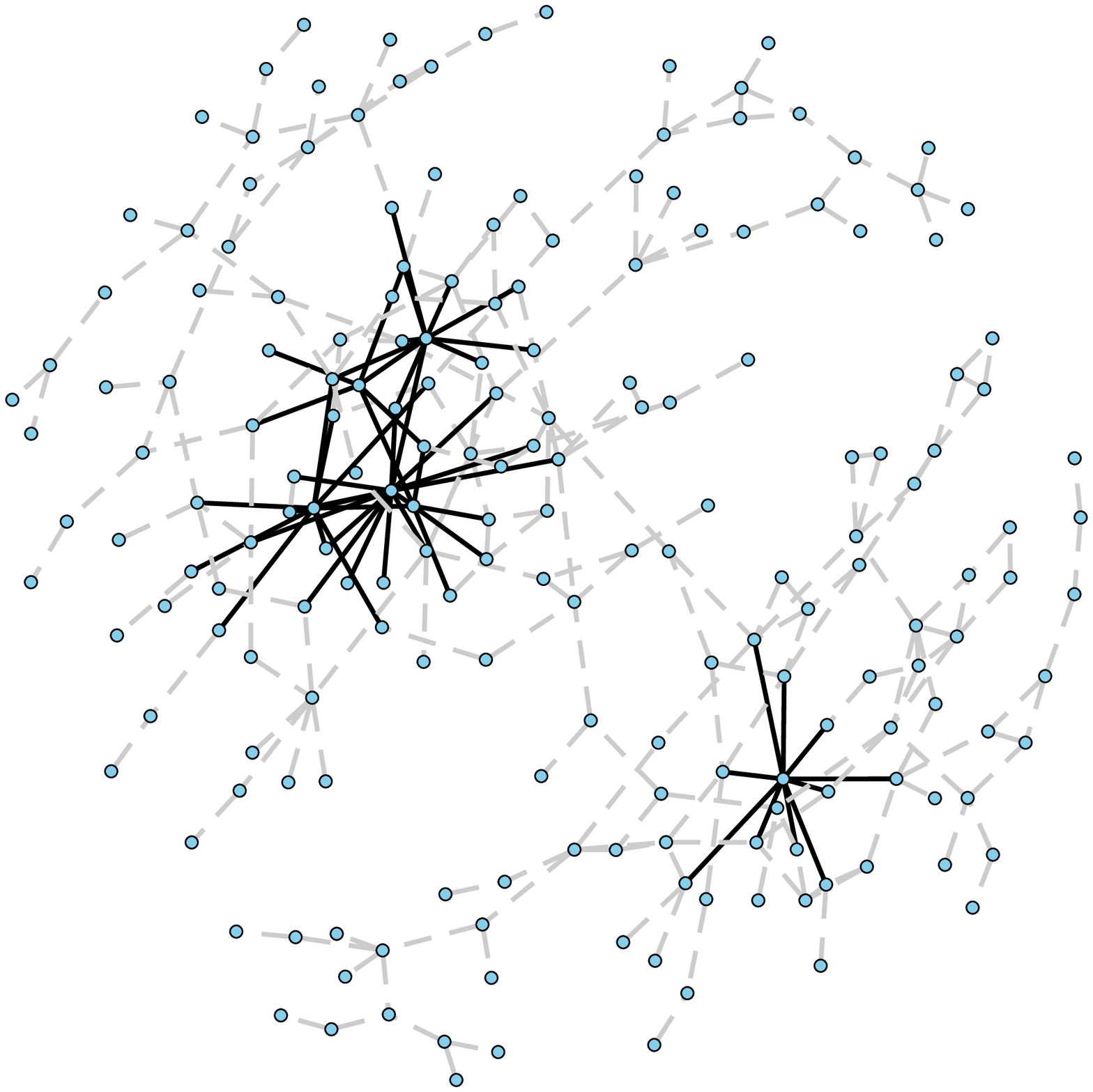}  
               \captionsetup{format = hang, justification = centering}
                \caption{SUD-DBP: $|E(\mathbf{A}^*)| = 54$\\ $\frac{\lambda'}{\mu'} = 0.0028, \gamma'= 4.4222$}
                \label{fig:KPP_SUDresultMatrix_lau0.0027826_mu1_g4.4222_E54}
\end{subfigure}\hfill
\begin{subfigure}{0.3\linewidth}
\centering
\includegraphics[width=0.95\linewidth]{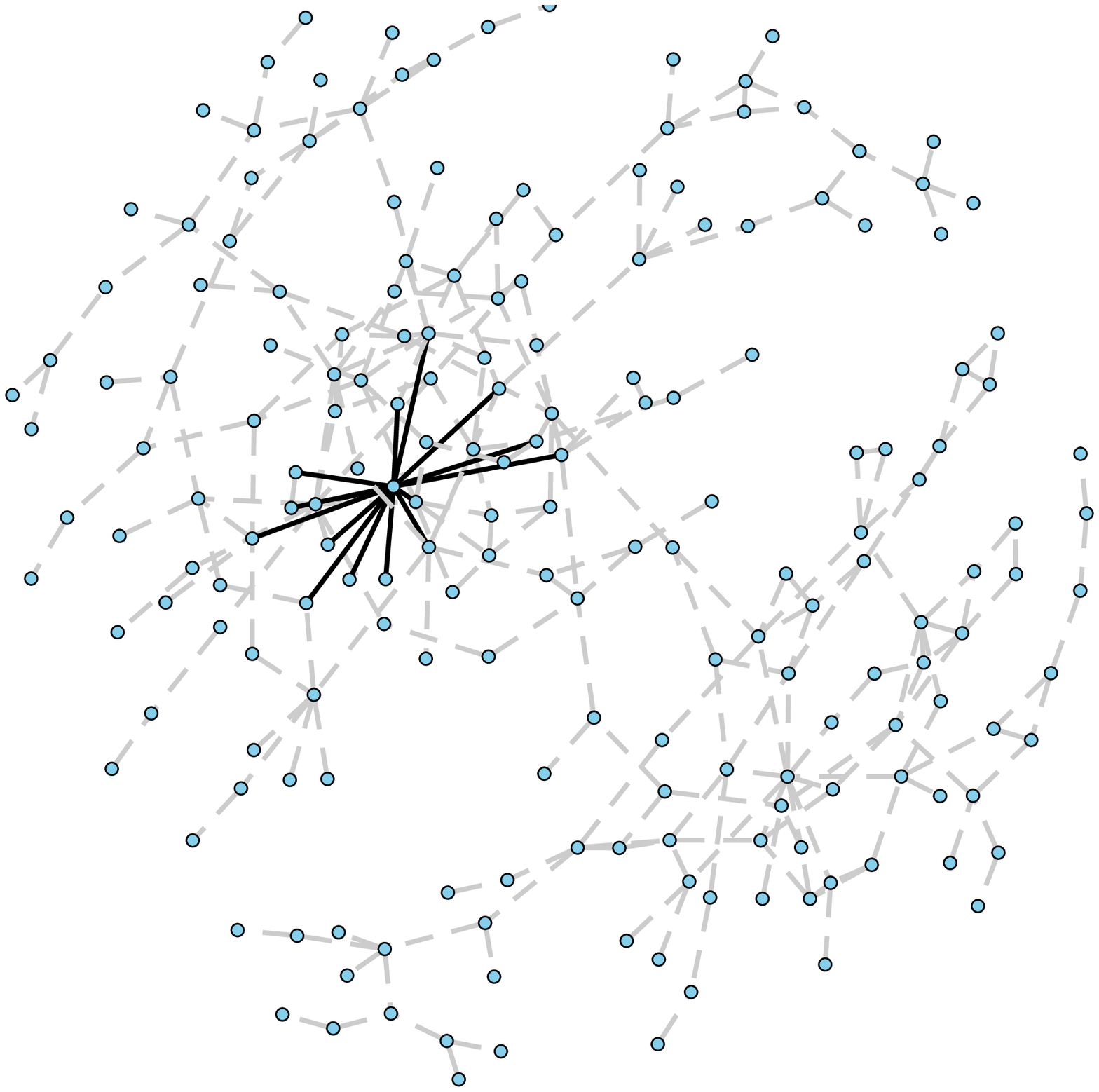}
                \captionsetup{format = hang, justification = centering}
                \caption{SUD-DBP: $|E(\mathbf{A}^*)| = 15$\\ $\frac{\lambda'}{\mu'} = 0.0167, \gamma'= 1.844$}
                \label{fig:KPP_SUDresultMatrix_lau0.016681_mu1_g1.8444_E15}
\end{subfigure}\hfill
\begin{subfigure}{0.3\linewidth}
\centering
\includegraphics[width=0.95\linewidth]{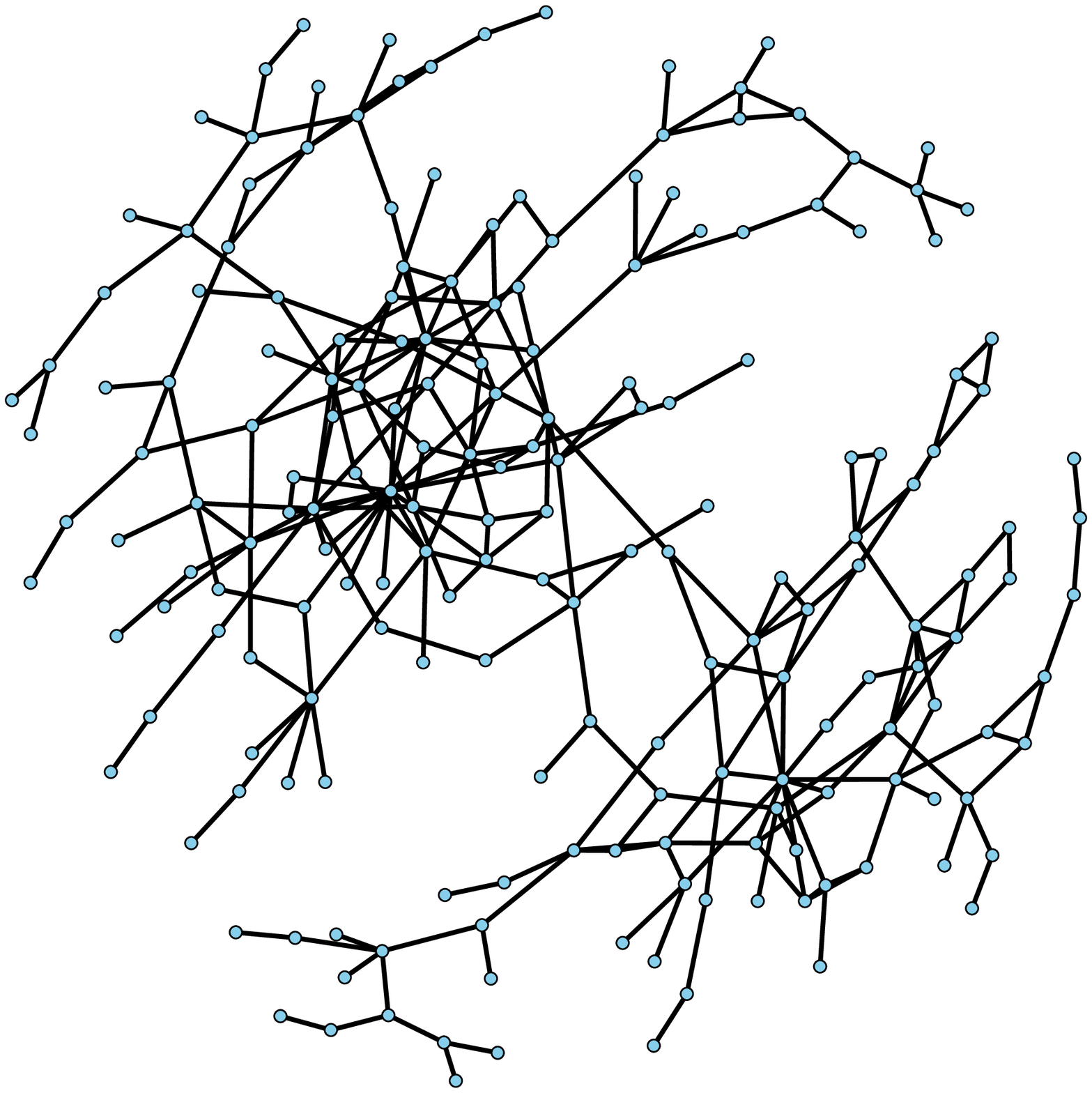}
                \captionsetup{format = hang, justification = centering}
                \caption{SUD-DBP: $|E(\mathbf{A}^*)| = 273$\\ $\frac{\lambda'}{\mu'} = 0.27542, \gamma'= 28.889$}
                \label{fig:KPP_AN}
\end{subfigure}\hfill

%%%%%%%%%%%%
\begin{subfigure}{0.3\linewidth}
\centering
\includegraphics[width=0.95\linewidth]{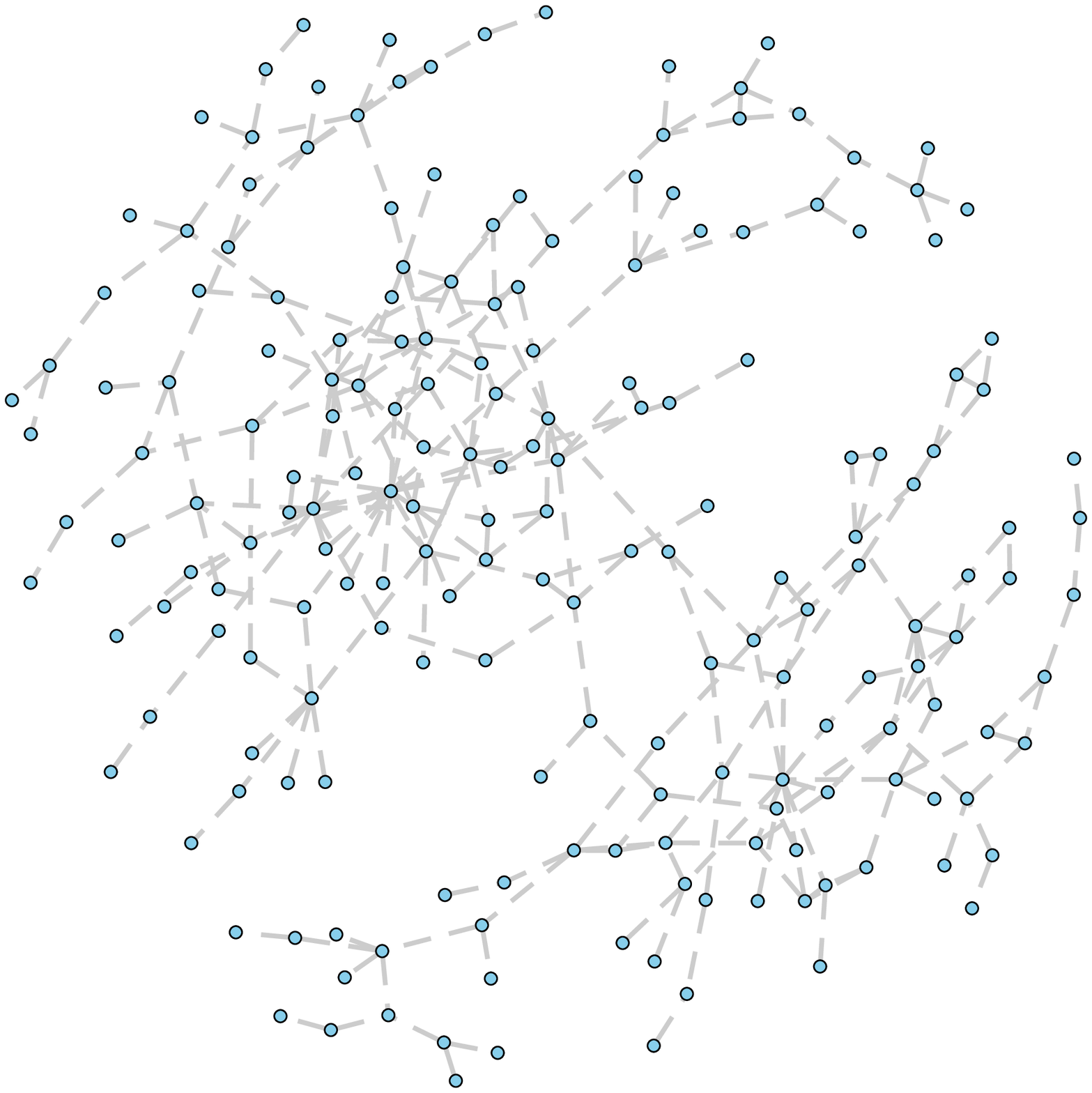}
                \captionsetup{format = hang, justification = centering}
                \caption{TRI-DBP: $|E(\mathbf{A}^*)| = 0$\\ $\frac{\lambda'}{\mu'} = 0.0028, \gamma'= 4.4222$}
                \label{fig:KPP_A0}
\end{subfigure}\hfill
\begin{subfigure}{0.3\linewidth}
\centering
\includegraphics[width=0.95\linewidth]{8e.eps}
                \captionsetup{format = hang, justification = centering}
                \caption{TRI-DBP: $|E(\mathbf{A}^*)| = 0$\\ $\frac{\lambda'}{\mu'} = 0.0167, \gamma'= 1.844$}
                \label{fig:KPP_A02}
\end{subfigure}\hfill
\begin{subfigure}{0.3\linewidth}
\centering
\includegraphics[width=0.95\linewidth]{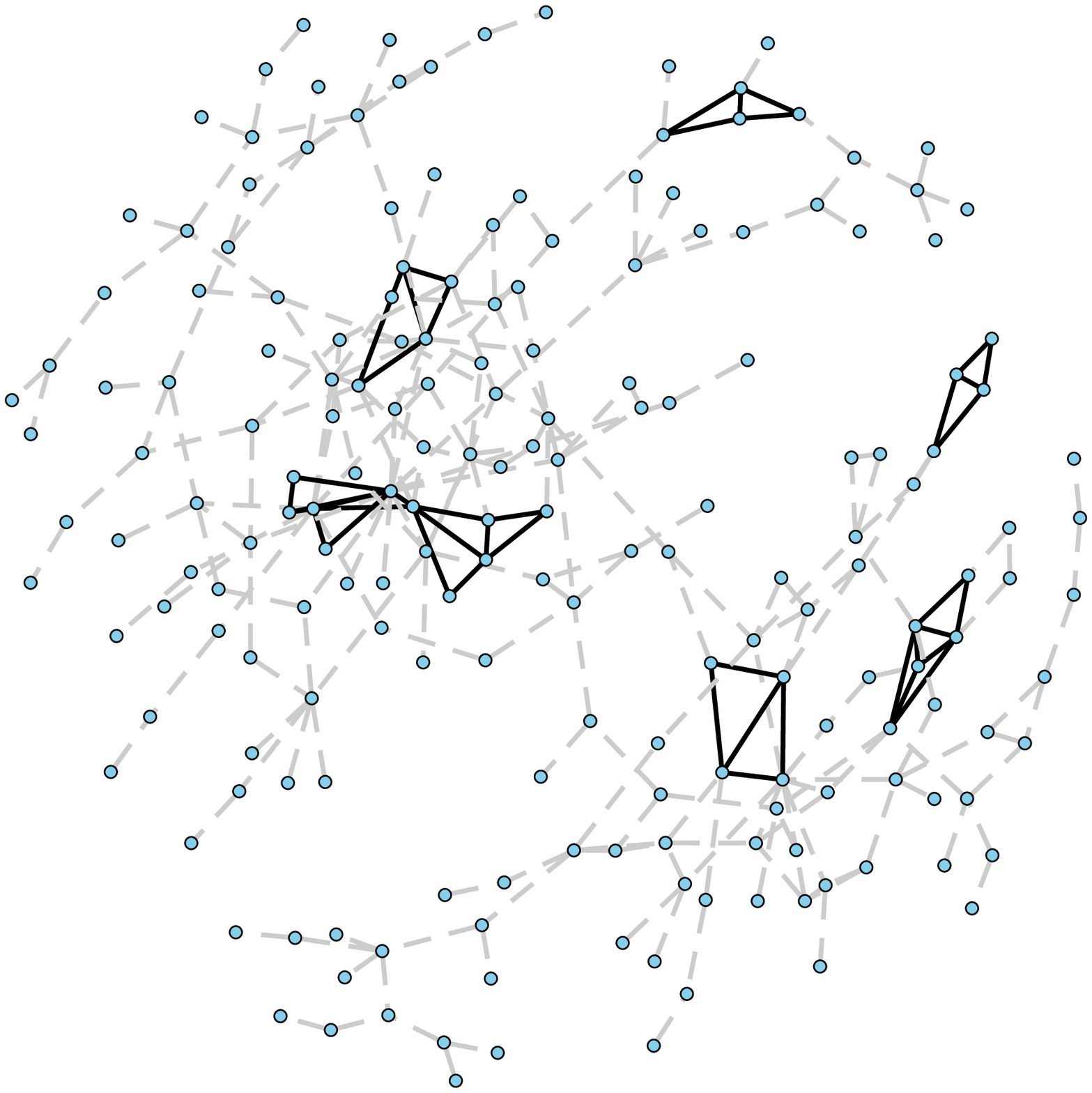}
                \captionsetup{format = hang, justification = centering}
                \caption{TRI-DBP: $|E(\mathbf{A}^*)| = 44$\\ $\frac{\lambda'}{\mu'} = 0.27542, \gamma'= 28.889$}
                \label{fig:KPP_TRANresultMatrix_lau0.27542_mu1_g28.889_E44}
\end{subfigure}\hfill

%%%%%%%%%%%%
\begin{subfigure}{0.3\linewidth}
\centering
\includegraphics[width=0.95\linewidth]{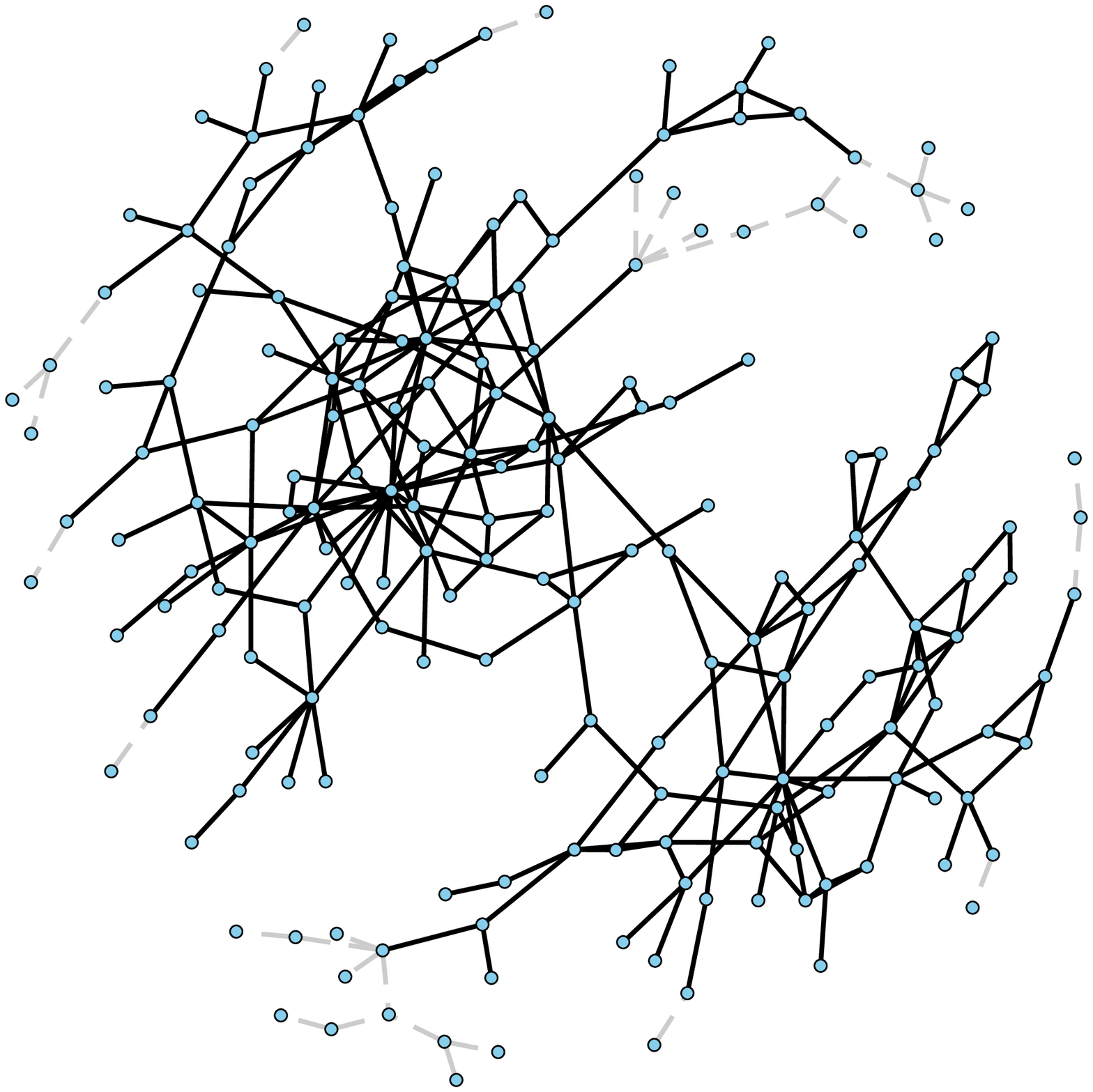}
                 \captionsetup{format = hang, justification = centering}
                \caption{POD-DBP: $|E(\mathbf{A}^*)| = 241$\\ $\frac{\lambda'}{\mu'} = 0.0028, \gamma'= 4.4222$}
                \label{fig:KPP_PODresultMatrix_lau0.0027826_mu1_g4.4222_E241}
\end{subfigure}\hfill
\begin{subfigure}{0.3\linewidth}
\centering
\includegraphics[width=0.95\linewidth]{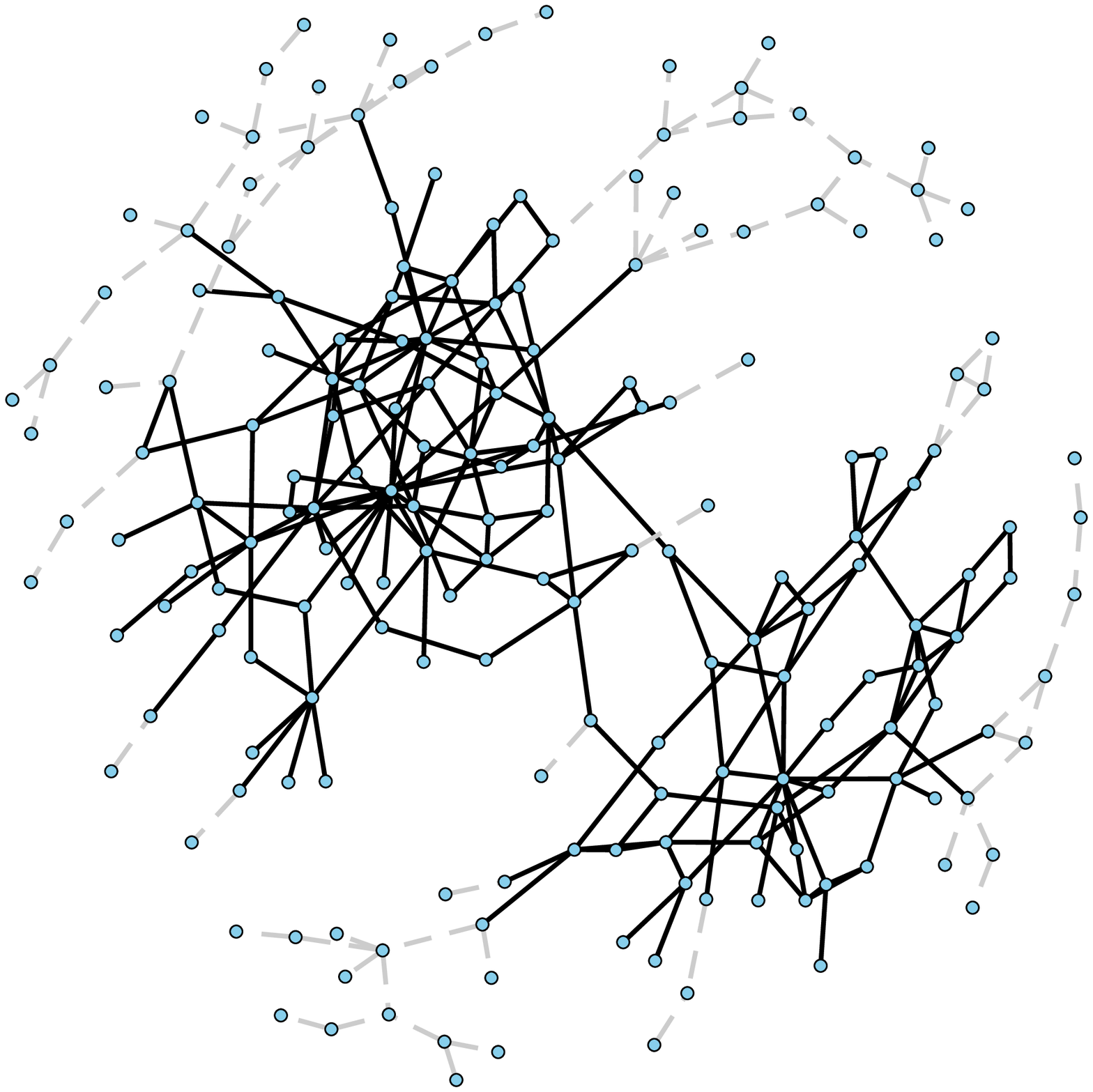}
                 \captionsetup{format = hang, justification = centering}
                \caption{POD-DBP: $|E(\mathbf{A}^*)| = 194$\\ $\frac{\lambda'}{\mu'} = 0.0167, \gamma'= 1.844$}
                \label{fig:KPP_PODresultMatrix_lau0.016681_mu1_g1.8444_E194}
\end{subfigure}\hfill
\begin{subfigure}{0.3\linewidth}
\centering
\includegraphics[width=0.95\linewidth]{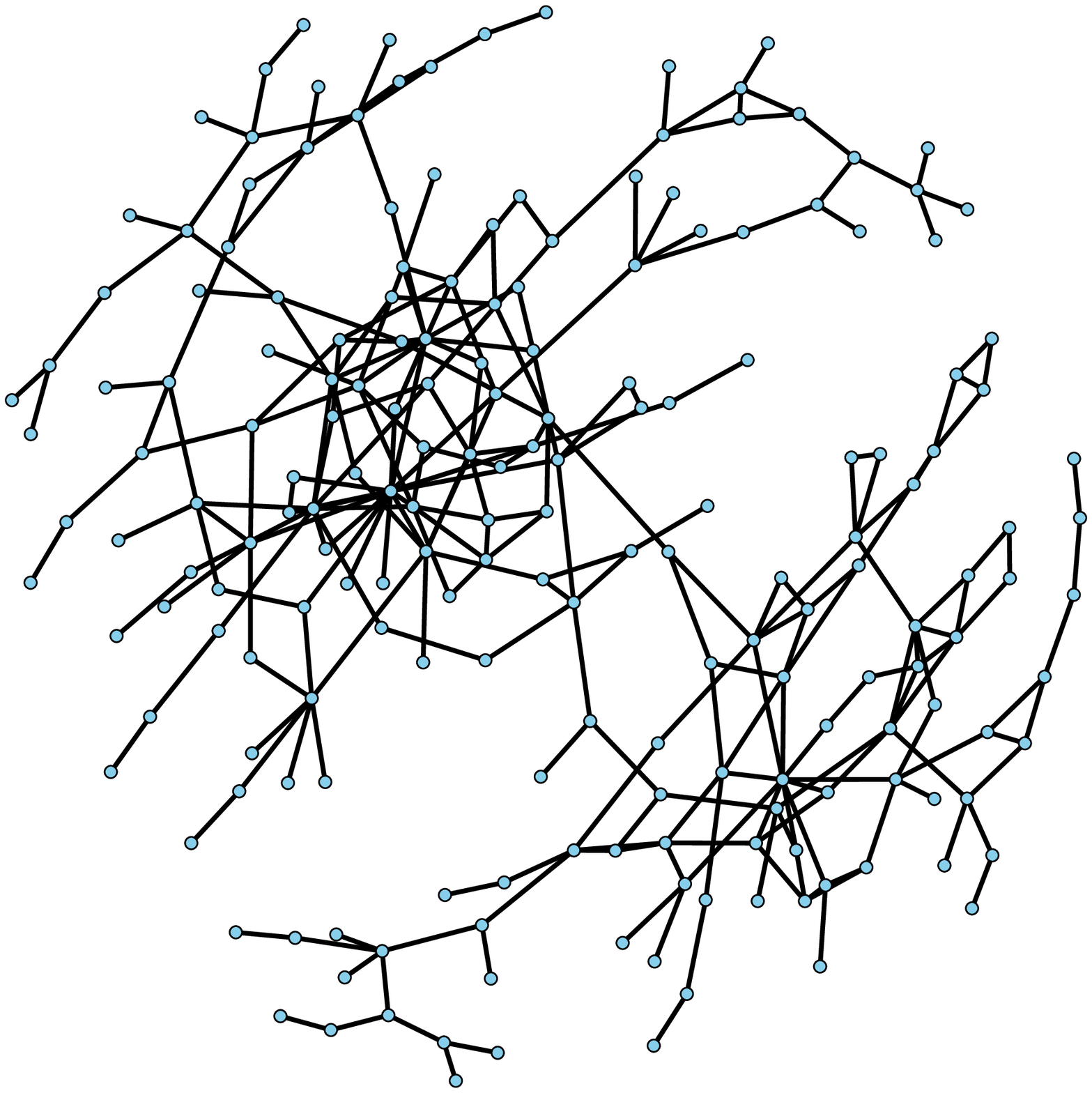}
                 \captionsetup{format = hang, justification = centering}
                \caption{POD-DBP: $|E(\mathbf{A})| = 273$ \\ $\frac{\lambda'}{\mu'} = 0.27542, \gamma'= 28.889$}
                \label{fig:KPP_AN2}
\end{subfigure}\hfill

 \captionsetup{format = hang}
 \caption{198-node Social Network Most-Probable Network (solid edges = closed, dashed edges = open)}\label{fig:KPP}
\end{figure*}

We also used DBP to model evolving social ties on a 198-node social network of drug users in Hartford, CT \cite{weeks2002social}. The nodes in the network represent individual drug users. Edge closure corresponds to formation or reestablishment of social ties, and edge opening corresponds to dissipation of social ties. In this case, edges that are closed in the most-probable networks are the social ties that are often established. They are the stronger social ties between individuals.

Assuming that $\gamma' > 1$ means that a relationship $(a,b)$ is more likely to form if agent $a$ and agent $b$ already have many other relationships. In particular, TRI-DBP assumes that a relationship $(a,b)$ is more likely to form if agent $a$ and agent $b$ already have many friends in common. This is based on the theory of triadic closure from social networks \cite{brandes2013studying}. On the other hand, spontaneous friendships are possible and relationships can dissipate over time. Therefore, it is natural to assume nonzero $\lambda'$ and $\mu'$ and that $0 < \frac{\lambda'}{\mu'} \le 1$.

Figure~\ref{fig:KPP} shows the most-probable network with different dynamic parameters $\lambda', \gamma', \mu'$ for SUD-DBP, TRI-DBP, and POD-DBP. We can see from Figure~\ref{fig:KPP_TRANresultMatrix_lau0.27542_mu1_g28.889_E44} and Figure~\ref{fig:KPP_PODresultMatrix_lau0.016681_mu1_g1.8444_E194} that, for some range of parameter values, social ties in triadic closures are stronger assuming TRI-DBP and POD-DBP cascade functions. On the other hand, social ties of highly connected individuals are stronger as in SUD-DBP.

%%%%%%%%%%%%%%%%%%%%%%%%%%%%%%%%%%%%%%%%%%%%
%%%%%%%%%%%%%%%% REGIME III
\section{Regime III) Cascading Prevention}\label{sec:regimeIIIedge}

In Regime III) Cascading Prevention: $\frac{\lambda'}{\mu'} > 1$ and $0 < \gamma' \le 1$, the structure-free term is driven by edge closure and the structure-dependent term is driven by edge opening. The dynamics of Regime III) Cascading Prevention is the opposite of Regime II) Cascading Failures. The average time an edge is open increases  with increasing number of closed edges on its end nodes; diffusion effects, instead of driving cascading failures, preventsedge closures. Therefore, this regime is called Regime III) Cascading Prevention. For SUD-DBP, TRI-DBP, and POD-DBP, we expect the solution space of the Most-Probable Network Problem \eqref{eq:astar} to exhibit phase transition depending on if edge opening or edge closure dominates.

However, since $0 < \gamma' \le 1$ in Regime III), the Most-Probable Network Problem can not be solved using submodularity according to Theorem~\ref{theorem:networksubmodular}. However, we will prove in the next section that we can still solve for the most-probable network in polynomial-time for a sub-range of parameter values in Regime III) for SUD-DBP. In this sub regime, the most-probable network corresponds to the maximum matching (see Definition \ref{def:matching}). For TRI-DBP, the most-probable network corresponds to triangle-free graphs. For POD-DBP, the most-probable network corresponds to the maximum star matching (see Definition~\ref{def:starmatching}).

\subsection{SUD-DBP and Maximum Matching}

The dynamics of SUD-DBP in regime III) put high probability on network states that maximize the number of closed edges, $|E(\mathbf{A})|$, and minimize the number of induced $P_3$ subgraphs. This means avoiding forming paths of length $2$ and allowing for only paths of length $1$. As a result, for sub-range of parameters, $\frac{\lambda'}{\mu'}, \gamma'$, we can prove that the set of close edges in the most-probable network is a maximum matching (see Definition~\ref{def:matching}). It is known that the maximum matching can be found in polynomial time for arbitrary, undirected graphs \cite{micali1980v, harvey2009algebraic}. Maximum matching may not be unique.

%%%%%%%%%%%%%%%%%%%%THEOREM
\begin{theorem}\label{theorem:maxmatching}
In Regime III), if $\lambda'\gamma' < \mu'$, then the most-probable network, 
$\mathbf{A}^* = \{ \mathbf{A} \in \mathcal{A} : g(E(\mathbf{A})) = 0, |E(\mathbf{A})| \text{ is maximum}\}$, where $g(\mathbf{A}) $ is the number of induced $P_3$ subgraphs. This is equivalent to stating that $E(\mathbf{A}^*)$ is a maximum matching (see Definition~\ref{def:matching}).

%$\mathbf{A}^*$, is the configuration(s) where $E(\mathbf{A}^*)$ is a maximum matching (see Definition~\ref{def:matching}).
\end{theorem}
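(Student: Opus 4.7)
The plan is to argue directly about ratios of the equilibrium probability \eqref{eq:sumeq} and show that, under the hypothesis $\lambda'\gamma' < \mu'$, any maximizer of $\pi$ must have $g(E(\mathbf{A}^*))=0$; since $g$ counts induced $P_3$'s and a $P_3$ is precisely two closed edges sharing an endpoint, $g=0$ is equivalent to saying that $E(\mathbf{A}^*)$ is a matching. The theorem then follows by maximizing $(\lambda'/\mu')^{|E(\mathbf{A})|}$ over matchings.

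First, I would observe that in Regime III) with $\lambda'/\mu' > 1$ and $0<\gamma'\leq 1$, taking the ratio in \eqref{eq:sumeq} of a state $\mathbf{A}$ and the state $T^{-}_{a,b}\mathbf{A}$ obtained by opening one closed edge $(a,b)\in E(\mathbf{A})$ gives
\begin{equation*}
\frac{\pi(T^{-}_{a,b}\mathbf{A})}{\pi(\mathbf{A})}
= \left(\frac{\mu'}{\lambda'}\right)\gamma'^{\,g(E(T^{-}_{a,b}\mathbf{A}))-g(E(\mathbf{A}))}
= \frac{\mu'}{\lambda'\,\gamma'^{k_{a,b}}},
\end{equation*}
where $k_{a,b}\geq 0$ is the number of $P_3$'s through edge $(a,b)$, which (as in the proof of Theorem~\ref{thm:SUD}) equals $(k_a-1)+(k_b-1)$ in $\mathbf{A}$. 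Thus $\pi$ strictly increases when we open $(a,b)$ if and only if $\mu' > \lambda'\,\gamma'^{k_{a,b}}$.

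Next, I would prove that $g(E(\mathbf{A}^*))=0$. Suppose instead that $\mathbf{A}^*$ contains at least one induced $P_3$. Then at least one of its closed edges $(a,b)$ satisfies $k_{a,b}\geq 1$. Since $0<\gamma'\leq 1$, the function $k\mapsto \gamma'^k$ is non-increasing, so $\gamma'^{k_{a,b}}\leq \gamma'$, hence
\begin{equation*}
\lambda'\,\gamma'^{k_{a,b}} \leq \lambda'\gamma' < \mu',
\end{equation*}
using the hypothesis. By the ratio computation, opening $(a,b)$ strictly increases $\pi$, contradicting the optimality of $\mathbf{A}^*$. Therefore $g(E(\mathbf{A}^*))=0$, which by Definition~\ref{def:matching} means $E(\mathbf{A}^*)$ is a matching.

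Finally, restricted to matchings, \eqref{eq:sumeq} reduces to $\pi(\mathbf{A})\propto (\lambda'/\mu')^{|E(\mathbf{A})|}$, which, since $\lambda'/\mu' > 1$, is strictly increasing in $|E(\mathbf{A})|$; hence $\mathbf{A}^*$ must attain the largest matching cardinality, i.e., $E(\mathbf{A}^*)$ is a maximum matching. The main obstacle is the first step: one must argue that inside any non-matching configuration there exists an edge whose removal strictly reduces $g$, and that the worst-case reduction ($k_{a,b}=1$) is still enough to make the ratio favorable; this is exactly what the hypothesis $\lambda'\gamma' < \mu'$ buys, and is why the condition cannot simply be replaced by $\gamma'\leq 1$ alone.
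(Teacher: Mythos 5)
Your proof is correct, and it takes a genuinely different route from the paper's. The paper argues by a global contradiction with case analysis: assuming the most-probable network is not a maximum matching, it separately rules out (i) matchings that are not maximum (fewer edges, same $g=0$, and $\lambda'/\mu'>1$), (ii) non-matchings with at most $|E(\mathbf{A}^*)|$ closed edges, and (iii) non-matchings with more closed edges; the last case is handled by stratifying states into layers $\mathcal{A}'_k$ with $|E(\mathbf{A})|=|E(\mathbf{A}^*)|+k$, asserting each such state induces at least $k$ paths of length two, and showing the best probability per layer, $\frac{1}{Z}\left(\frac{\lambda'}{\mu'}\right)^{|E(\mathbf{A}^*)|+k}\gamma'^{k}$, decreases in $k$ and lies below $\pi(\mathbf{A}^*)$ precisely because $\lambda'\gamma'<\mu'$. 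You instead give a local-improvement (exchange) argument: at any maximizer with $g\ge 1$ there is a closed edge lying in a $P_3$, hence with $k_{a,b}=(k_a-1)+(k_b-1)\ge 1$, and opening it multiplies $\pi$ by $\mu'/(\lambda'\gamma'^{k_{a,b}})\ge \mu'/(\lambda'\gamma')>1$, a contradiction; cardinality maximization over matchings then finishes via $\lambda'/\mu'>1$. Your route has two advantages: it sidesteps the combinatorial claim that the paper leaves unproved (that a closed-edge set exceeding the matching number by $k$ must induce at least $k$ distinct $P_3$ subgraphs), needing only the trivial fact that a non-matching contains two adjacent closed edges, and it isolates exactly where $\lambda'\gamma'<\mu'$ (rather than merely $\gamma'\le 1$) is needed, namely in the worst case $k_{a,b}=1$. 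The paper's layered comparison, on the other hand, yields slightly more information: explicit upper bounds on $\pi$ over each layer of non-matching states, not just their suboptimality. Both arguments extend in the same way to the TRI-DBP and POD-DBP versions of the theorem.
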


\begin{proof}
We prove the theorem by contradiction. 
Let $\mathcal{A}^*$ denote the set of network states whose set of closed edges are maximum matching. 
\[
\mathcal{A}^* = \{ \mathbf{A} \in \mathcal{A} : E(\mathbf{A}) \text{ is maximum matching}\}.
\]
By the definition of matching, the number of $P_3$ subgraphs induced by any network state $\mathbf{A}^* \in \mathcal{A}^*$, $g(E(\mathbf{A}^*)) = 0$. Suppose that the most-probable network is $\mathbf{A}' \in \mathcal{A} \setminus \mathcal{A}^*$. Then there are two possibilities for $\mathbf{A}'$:
\begin{enumerate}
\item $\mathbf{A}'$ is the network state such that $E(\mathbf{A}')$ is a matching but it is not the maximum matching.
\item $\mathbf{A}'$ is the network state such that $E(\mathbf{A}')$ is not a matching. 
\end{enumerate}

\begin{description}
\item [Case 1) ] \hfill \\
This implies that $|E(\mathbf{A}')| < |E(\mathbf{A}^*)|$ for any $\mathbf{A}^* \in \mathcal{A}^*$. Since $\frac{\lambda'}{\mu'} > 1$, then by the equilibrium distribution \eqref{eq:sumeq}, $\pi(\mathbf{A}') < \pi(\mathbf{A}^*)$. Therefore, $\mathbf{A}'$ can not be the most-probable network.

\item [Case 2)  ] \hfill \\
This implies that $g(E(\mathbf{A}')) > 0$. For any $\mathbf{A}^* \in \mathcal{A}^*$, there are two possibilities 

\begin{enumerate}
\item $|E(\mathbf{A}')| \le |E(\mathbf{A}^*)|$  \hfill \\
If $|E(\mathbf{A}')| \le |E(\mathbf{A}^*)|$, then 
\[
\left(\frac{\lambda'}{\mu'}\right)^{|E(\mathbf{A}')|}\gamma'^{g(E(\mathbf{A}'))} < \left( \frac{\lambda'}{\mu'}\right)^{|E(\mathbf{A}^*)|}\gamma'^{g(E(\mathbf{A}^*))}.
\]
Therefore, $\mathbf{A}'$ can not be the most-probable network.

\item $|E(\mathbf{A}')| > |E(\mathbf{A}^*)|$.

%Let $|E(\mathbf{A}^*)| = k$ and
Let $\mathcal{A}'_1$ denote the set of network states such that for any $\mathbf{A}' \in \mathcal{A}'_1$, $|E(\mathbf{A}')| =  |E(\mathbf{A}^*)|+1$. This also implies that the number of $P_3$ subgraphs induced by $E(\mathbf{A}')$, $g(E(\mathbf{A})) \ge 1$:
\[
\mathcal{A}'_1 = \{ \mathbf{A} \in \mathcal{A}: g(E(\mathbf{A})) \ge 1, |E(\mathbf{A})| = |E(\mathbf{A}^*)| + 1\}
\]
Similarly, let $\mathcal{A}'_2$ denote the set of network states such that for any $\mathbf{A}' \in \mathcal{A}'_2$, $|E(\mathbf{A}')| =  |E(\mathbf{A}^*)|+2$: 
\[
\mathcal{A}'_2 = \{ \mathbf{A} \in \mathcal{A}: g(E(\mathbf{A})) \ge 2, |E(\mathbf{A})| = |E(\mathbf{A}^*)| + 2\}.
\]
We can define additional sets such as $\mathcal{A}'_3$ $\mathcal{A}'_4, \ldots, \mathcal{A}'_N$. Realize that $g(E(\mathbf{A}'_2)) > g(E(\mathbf{A}'_1))$ for any $\mathbf{A}'_1\in \mathcal{A}'_1$ and $\mathbf{A}'_2\in \mathcal{A}'_2$, $g(E(\mathbf{A}'_3)) > g(E(\mathbf{A}'_2))$ for any $\mathbf{A}'_2\in \mathcal{A}'_2$ and $\mathbf{A}'_3\in \mathcal{A}'_3$. Similarly $g(E(\mathbf{A}'_N)) > g(E(\mathbf{A}'_{N-1}))$ for any $\mathbf{A}'_{N-1} \in \mathcal{A}'_{N-1}$ and $\mathbf{A}'_{N} \in \mathcal{A}'_{N}$.

Since $0 < \gamma' \le 1$, the network state with the maximum equilibrium probability in set $\mathcal{A}'_1$ has the probability 
\[
\pi(\overline{\mathbf{A}'_1})= \frac{1}{Z}\left(\frac{\lambda'}{\mu'}\right)^{|E(\mathbf{A}^*)| +1}\gamma'
\]
and the network state with the maximum equilibrium probability in set $\mathcal{A}'_2$ has the probability 
\[
\pi(\overline{\mathbf{A}'_2})= \frac{1}{Z}\left(\frac{\lambda'}{\mu'}\right)^{|E(\mathbf{A}^*)| +2}\gamma'^{2}.
\]
The additional condition $\lambda'\gamma' < \mu'$ implies that $\pi(\overline{\mathbf{A}'_1})> \pi(\overline{\mathbf{A}'_2})$. Similar argument will show that $\pi(\overline{\mathbf{A}'_2})  > \pi(\overline{\mathbf{A}'_3})$, and that $\pi(\mathbf{A}'_{N-1}) > \pi(\mathbf{A}'_N)$.

Since 
\[
\pi(\mathbf{A}^*) = \frac{1}{Z}\left(\frac{\lambda'}{\mu'}\right)^{k},
\]
the network state $\mathbf{A}'$ can not be the most-probable network as $\pi(\mathbf{A}') \le \pi(\overline{\mathbf{A}'_1})$ by definition.

\end{enumerate}

\end{description}

\end{proof}

%%%%%%%%%%%%%%%%%%%%%%%%%%%%%%%%%%%%%%

The dynamics of SUD-DBP in regime III) put high probability on network states that maximize the number of closed edges, $|E(\mathbf{A})|$, and minimize the number of induced $P_3$ subgraphs. This means avoiding forming paths of length $2$ and allowing for only paths of length $1$. As a result, for a sub-range of parameters, $\frac{\lambda'}{\mu'}, \gamma'$, we can prove that the set of close edges in the most-probable network is a maximum matching (see Definition~\ref{def:matching}). It is known that the maximum matching can be found in polynomial time for arbitrary, undirected graphs \cite{micali1980v, harvey2009algebraic}. Note that the maximum matching may not be unique.

%%%%%%%%%%%%%%%%%%%%%%%%
\subsection{TRI-DBP and Triangle Free Graphs}
The dynamics of TRI-DBP in regime III) put higher probability on network states that maximize the number of closed edges, $|E(\mathbf{A})|$ and minimize the number of induced  $C_3$ subgraphs. This means avoiding forming cycles of length $3$ (i.e., triangles). Consequently, the most-probable configuration will be biased toward the set of closed edges that induces triangle-free graphs. 

\begin{theorem}\label{theorem:maxstarmatching}
In Regime III), if $\lambda'\gamma' < \mu'$, then the most-probable network, 
$\mathbf{A}^* = \{ \mathbf{A} \in \mathcal{A} : g(E(\mathbf{A})) = 0, |E(\mathbf{A})| \text{ is maximum}\}$, where $g(\mathbf{A})$ is the number of induced $C_3$ subgraphs. This is equivalent to stating that $\mathbf{A}^*$ is the largest possible triangle-free subgraph in the maximal network.
\end{theorem}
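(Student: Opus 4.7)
The plan is to adapt the contradiction-and-case-analysis structure used in the proof of Theorem~\ref{theorem:maxmatching} for SUD-DBP, replacing $P_3$ subgraphs with $C_3$ subgraphs and maximum matching with maximum triangle-free subgraph. Let $k$ denote the number of edges in a largest triangle-free subgraph of the maximal network, and set $\mathcal{A}^{*} = \{\mathbf{A} \in \mathcal{A} : g(E(\mathbf{A})) = 0,\ |E(\mathbf{A})| = k\}$. By~\eqref{eq:TRIeq}, $\pi(\mathbf{A}^{*}) = \frac{1}{Z}(\lambda'/\mu')^{k}$ for every $\mathbf{A}^{*} \in \mathcal{A}^{*}$. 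I would suppose for contradiction that the most-probable network is some $\mathbf{A}' \in \mathcal{A} \setminus \mathcal{A}^{*}$ and aim to show $\pi(\mathbf{A}') < \pi(\mathbf{A}^{*})$. Note first that the joint conditions $\lambda'/\mu' > 1$ and $\lambda'\gamma' < \mu'$ of Regime III force $0 < \gamma' < 1$ strictly, which will be used below.

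The argument splits into three cases. If $g(E(\mathbf{A}')) = 0$, then $E(\mathbf{A}')$ is a triangle-free edge set that is not maximum, so $|E(\mathbf{A}')| < k$ and $\lambda'/\mu' > 1$ immediately yield $\pi(\mathbf{A}') < \pi(\mathbf{A}^{*})$. If $g(E(\mathbf{A}')) \geq 1$ and $|E(\mathbf{A}')| \leq k$, then $\gamma'^{g(E(\mathbf{A}'))} < 1$ and $(\lambda'/\mu')^{|E(\mathbf{A}')|} \leq (\lambda'/\mu')^{k}$ combine to give $\pi(\mathbf{A}') < \pi(\mathbf{A}^{*})$. The principal remaining case is $|E(\mathbf{A}')| = k + j$ for some integer $j \geq 1$.

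The key combinatorial step I would prove is that $g(E(\mathbf{A}')) \geq j$ whenever $|E(\mathbf{A}')| = k + j$. To establish this, I would build a triangle cover: pick one edge from each of the $g(E(\mathbf{A}'))$ triangles induced by $E(\mathbf{A}')$, and let $D$ be the resulting edge set, so $|D| \leq g(E(\mathbf{A}'))$. By construction $E(\mathbf{A}') \setminus D$ is a triangle-free subgraph of the maximal network, hence $|E(\mathbf{A}') \setminus D| \leq k$, and rearranging gives $g(E(\mathbf{A}')) \geq |D| \geq j$. Combining with $0 < \gamma' < 1$ and $\lambda'\gamma' < \mu'$,
\begin{equation*}
\pi(\mathbf{A}') \leq \frac{1}{Z}\left(\frac{\lambda'}{\mu'}\right)^{k+j}\gamma'^{\,j} = \frac{1}{Z}\left(\frac{\lambda'}{\mu'}\right)^{k}\left(\frac{\lambda'\gamma'}{\mu'}\right)^{j} < \pi(\mathbf{A}^{*}),
\end{equation*}
contradicting the optimality of $\mathbf{A}'$ and completing the proof.

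The main obstacle is the combinatorial inequality $g(E(\mathbf{A}')) \geq j$. Unlike the SUD-DBP setting, where any single edge added to a configuration whose edge count exceeds $\nu(G)$ automatically induces a new $P_3$ through its shared endpoint, a single edge added to a triangle-containing configuration may or may not create new triangles, so a naive per-edge accounting does not work. The triangle-cover argument sidesteps this issue by exploiting the extremal property of $k$ globally rather than incrementally; once this inequality is available, the geometric comparison driven by $\lambda'\gamma' < \mu'$ closes the proof in precisely the same manner as in Theorem~\ref{theorem:maxmatching}.
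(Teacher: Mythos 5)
Your proof is correct and follows the same contradiction-and-case-analysis skeleton that the paper uses for Theorem~\ref{theorem:maxmatching} (whose argument the paper simply cites for this theorem), but you handle the crucial counting step by a different, and more rigorous, route. In the case $|E(\mathbf{A}')| > |E(\mathbf{A}^*)|$, the paper introduces the classes $\mathcal{A}'_j$ of states with $j$ extra closed edges, builds the property $g(E(\mathbf{A})) \ge j$ into their definition without verifying that every state with $|E(\mathbf{A}^*)|+j$ closed edges actually satisfies it, and then chains comparisons $\pi(\overline{\mathbf{A}'_1}) > \pi(\overline{\mathbf{A}'_2}) > \cdots$ of the best representatives of successive classes. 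You instead prove the needed inequality outright with the triangle-cover deletion argument: removing one edge from each induced triangle leaves a triangle-free subset of $E_{\max}$, whose size is at most $k$ by maximality of $k$, so $g(E(\mathbf{A}')) \ge |D| \ge j$; a single geometric comparison using $\lambda'\gamma'/\mu' < 1$ then closes the case. This buys a self-contained justification of precisely the step that, as you observe, cannot be obtained by the naive per-edge accounting that is tempting in the SUD-DBP setting, whereas the paper leaves that combinatorial fact implicit. Your remark that Regime III together with $\lambda'\gamma' < \mu'$ forces $\gamma' < 1$ strictly (needed for the strict inequality in your second case) is likewise a detail the paper glosses over; with it, your case analysis is exhaustive and each case yields a strict probability gap, so the argmax set is exactly the set of maximum triangle-free configurations, as claimed.
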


The proof follows the same argument as in Theorem~\ref{theorem:maxmatching}.

%%%%%%%%%%%%%%%%%%%
\subsection{POD-DBP and Maximum Star Matching}
The dynamics of POD-DBP in regime III) put higher probability on network states that maximize the number of closed edges, $|E(\mathbf{A})|$ and minimize the number of induced  $C_3$ and $P_4$ subgraphs. This means avoiding forming paths of length $3$ and cycles of length $3$ and allowing paths of length $1$ and $2$. Consequently, the most-probable configuration will be biased toward the set of closed edges that maximizes the number of induced paths of length 2. 

\begin{theorem}\label{theorem:maxstarmatching}
In Regime III), if $\lambda'\gamma' < \mu'$, then the most-probable network, 
$\mathbf{A}^* = \{ \mathbf{A} \in \mathcal{A} : g(E(\mathbf{A})) = 0, |E(\mathbf{A})| \text{ is maximum}\}$, where $g(\mathbf{A})$ is the number of induced $C_3$ and $P_4$ subgraphs. This is equivalent to stating that $E(\mathbf{A}^*)$ is a maximum star matching (see Definition~\ref{def:starmatching}).

%$\mathbf{A}^*$, is the configuration(s) where $E(\mathbf{A}^*)$ is a maximum star matching (see Definition~\ref{def:starmatching}).
\end{theorem}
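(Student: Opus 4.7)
The plan is to mirror the proof strategy of Theorem 6.1 (SUD-DBP maximum matching), since the structure of the equilibrium distribution is identical and only the geometric interpretation of $g(E(\mathbf{A}))$ changes. The first step is to establish the combinatorial characterization that underpins everything: $E(\mathbf{A})$ is a star matching if and only if the subgraph induced by $E(\mathbf{A})$ contains neither a $C_3$ nor a $P_4$. The forward direction is immediate, since a disjoint union of stars is acyclic (no $C_3$) and has diameter at most $2$ in each component (no $P_4$). For the converse, any connected component with no induced $P_4$ and no induced $C_3$ forces every pair of edges sharing no endpoint to be absent, and the only connected graphs with this property are stars (including single edges). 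Hence $g(E(\mathbf{A})) = 0$ exactly when $E(\mathbf{A})$ is a star matching.

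The second step is to set up the proof by contradiction exactly as in Theorem 6.1. Define
\[
\mathcal{A}^* = \{\mathbf{A} \in \mathcal{A} : E(\mathbf{A}) \text{ is a maximum star matching}\},
\]
so that $g(E(\mathbf{A}^*)) = 0$ and $|E(\mathbf{A}^*)|$ equals the star-matching number for every $\mathbf{A}^* \in \mathcal{A}^*$. Assume for contradiction that $\mathbf{A}' \in \mathcal{A} \setminus \mathcal{A}^*$ is a most-probable network. Split into two cases: (1) $E(\mathbf{A}')$ is a star matching but not a maximum one, and (2) $E(\mathbf{A}')$ is not a star matching, i.e., $g(E(\mathbf{A}')) \ge 1$.

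In Case (1), $g(E(\mathbf{A}')) = 0$ and $|E(\mathbf{A}')| < |E(\mathbf{A}^*)|$, so since $\lambda'/\mu' > 1$ in Regime III, $\pi(\mathbf{A}') < \pi(\mathbf{A}^*)$ directly from the POD-DBP equilibrium distribution \eqref{eq:podeq}. In Case (2), I would split further according to whether $|E(\mathbf{A}')| \le |E(\mathbf{A}^*)|$ or $|E(\mathbf{A}')| > |E(\mathbf{A}^*)|$. The first sub-case is handled by noting that both the structure-free factor $(\lambda'/\mu')^{|E(\mathbf{A})|}$ and the structure-dependent factor $\gamma'^{g(E(\mathbf{A}))}$ are weakly smaller than at $\mathbf{A}^*$ (using $\gamma' \le 1$ and $g \ge 1$). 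The second sub-case is the main obstacle and is exactly where the hypothesis $\lambda'\gamma' < \mu'$ enters: I would stratify $\mathcal{A}$ by the excess number of edges over the star-matching number, defining
\[
\mathcal{A}'_k = \{\mathbf{A} \in \mathcal{A} : |E(\mathbf{A})| = |E(\mathbf{A}^*)| + k,\; g(E(\mathbf{A})) \ge k\},
\]
and observe that every additional edge beyond a star matching must create at least one new induced $C_3$ or $P_4$ (otherwise the subgraph would still be a star matching, contradicting maximality). Hence for the best candidate $\overline{\mathbf{A}'_k}$ in $\mathcal{A}'_k$ we have $\pi(\overline{\mathbf{A}'_k}) \le (1/Z)(\lambda'/\mu')^{|E(\mathbf{A}^*)|+k}\gamma'^{k}$, and $\lambda'\gamma'/\mu' < 1$ gives a strictly decreasing chain $\pi(\mathbf{A}^*) > \pi(\overline{\mathbf{A}'_1}) > \pi(\overline{\mathbf{A}'_2}) > \cdots$, which contradicts the assumption that $\mathbf{A}'$ is the most probable.

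The hard part is verifying the claim that each additional edge beyond a maximum star matching introduces at least one new $C_3$ or $P_4$; this is really a statement about the structure of star matchings as the maximal subgraphs avoiding $\{C_3, P_4\}$, and follows from the characterization in the first step together with the maximality of $E(\mathbf{A}^*)$. Once this monotonicity is established, the rest of the argument tracks the corresponding steps of the proof of Theorem 6.1 verbatim, with $P_3$ replaced by $\{C_3, P_4\}$ and matching replaced by star matching.
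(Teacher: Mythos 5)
Your proposal is correct and follows essentially the same route as the paper, which proves this theorem by repeating verbatim the contradiction argument of Theorem~\ref{theorem:maxmatching} (case split on whether $E(\mathbf{A}')$ is a star matching, then on $|E(\mathbf{A}')|$ versus $|E(\mathbf{A}^*)|$, with the stratification $\mathcal{A}'_k$ and the hypothesis $\lambda'\gamma' < \mu'$ giving the decreasing chain). Your explicit characterization that $g(E(\mathbf{A})) = 0$ if and only if $E(\mathbf{A})$ is a star matching (no $C_3$ or $P_4$ subgraphs) is a point the paper leaves implicit, and your flagged claim that each edge beyond the maximum star matching adds at least one such subgraph is stated at the same level of rigor as the corresponding step in the paper's own proof.
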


The proof follows the same argument as in Theorem~\ref{theorem:maxmatching}.

%%%%%%%%%%%%%%%%%%%%%%%%%%%
\section{Conclusion}\label{sec:conclusion}

We presented the Dynamic Bond Percolation process, a stochastic, edge-centric network process. DBP assumed that an edge in the maximal network can spontaneously open or close. In addition, the closure rate of an edge also depends on the state of the neighboring edges. This is captured by the cascade function, $f(\mathcal{N}_a, \mathcal{N}_b)$. In this paper, we considered 3 different variations of DBP: 1) SUD-DBP, 2) TRI-DBP, and 3) POD-DBP. The advantage of DBP over other existing network process models is that the equilibrium distribution can be derived in closed-form for arbitrary maximal network.

We proved that the sufficient statistics of SUD-DBP at equilibrium are the number of closed edges, $|E(\mathbf{A})|$, and the number of $P_3$ subgraphs. The sufficient statistics of TRI-DBP are the number of closed edges, $|E(\mathbf{A})|$, and the number of $C_3$ subgraphs. The sufficient statistic of POD-DBP are the number of closed edges, $|E(\mathbf{A})|$, and the number of $C_3$ and $P_4$ subgraphs. As the sufficient statistics are different depending on the cascade function, it is intuitive that the most-probable network of the process may also differ.

In Regime I) Recovery Dominant and Regime IV) Removal Dominant, this is not the case since SUD-DBP, TRI-DBP, and POD-DBP all lead to the consensus most-probable network of $\mathbf{A}_0$ and $\mathbf{A_{\max}}$. This does not apply to Regime II) Cascading Failures and Regime III) Cascading Prevention. We proved that the Most-Probable Configuration Problem can be solved using polynomial-time algorithm in Regime II) because solving the optimization of the most-probable configuration corresponds to finding the minimum of a submodular function. We illustrated using two real-world, heterogenous networks, phase transition behavior as well as the existence of non-degenerate solutions of the Most-Probable Configuration Problem. This means that, for certain processes, some subset of edges is more critical than others. In SUD-DBP, these at-risk edges tend to belong to edges in large star subgraphs; in POD-DBP and TRI-DBP, they tend to belong to edges in triangle subgraphs. 

In Regime III), we proved that, for a subregime of parameters, the solution of the Most-Probable Network Problem corresponds to maximum matching for SUD-DBP. For TRI-DBP, the solution corresponds to triangle-free graphs. For POD-DBP, the solution corresponds to maximum star matching. We showed with DBP and preliminary analysis that the underlying topology and the form of the cascade function have a large impact on the equilibrium behavior of the dynamical process. In the future, we will extend our analysis to transient behaviors as well as considering other cascade functions.

\bibliographystyle{unsrtnat}
% argument is your BibTeX string definitions and bibliography database(s)
\bibliography{refs}

%%%%%%%%%%%%%%%
%APPENDICES
%%%%%%%%%%%%%%%

\appendix

%%%%%%%%SUD PROOF

%%%%%%%%%%%%%%%%%%%%%%%%%%%%%%%%%%%%%%%%%%%% g functions
%%%%%%%%%%%%%%%%%%%%%%%%%%%%%%%%%%%%%%%%%%%%%%%%%
\section{Determining $g(E(\mathbf{A}))$ for SUD-DBP}\label{sec:gsud}
For the Sum-Dependent Dynamic Bond Percolation Model, $g(E(\mathbf{A}))$ is the number of paths of length 2 formed by the set of edges, $E(\mathbf{A})$, of the network represented by the adjacency matrix~$\mathbf{A}$. 
The number of walks of length 2 from node $i$ to node $j \neq i$ is
\[
\sum_{i = 1}^N \sum_{j > i} (\mathbf{A}^2)_{i,j}.
\]
Realize that this is also equivalent to the number of paths of length 2 from node $i$ to node $j \neq i$.

\section{Determining $g(E(\mathbf{A}))$ for POD-DBP}\label{sec:gpod}
For the Product-Dependent Dynamic Bond Percolation Model, $g(E(\mathbf{A}))$ is the number of triangles and of paths of length 3 formed by the set of edges, $E(\mathbf{A})$, of the network represented by the adjacency matrix $\mathbf{A}$. 
The number of cycles of length $3$ is
\[
\sum_{i = 1}^N \frac{(\mathbf{A}^3)_{i,i}}{6}.
\]

We need to find the number of paths of length 3. We know that the number of walks of length 3 from node $i$ to node $j \neq i$ is
\[
\sum_{i = 1}^N \sum_{j > i} (\mathbf{A}^3)_{i,j}.
\]
This number is \emph{larger} than the number of paths of length 3 because there are walks from node $i$ to node $j$ that are not paths. Figure~\ref{fig:notpath} illustrates the three cases of walks of length 3 that are not paths of length 3 because the vertices repeat. 

\begin{figure*}[h]
        \centering
        \begin{subfigure}[b]{\textwidth}
                \centering
           \begin{tikzpicture}[->,>=stealth',shorten >=1pt,auto,node distance=3cm,
  		thick,main node/.style={circle,draw,font=\sffamily\small}]

  		\node[main node] (1) {i};
  		\node[main node] (2) [below left of=1] {j};
 		 \node[main node] (3) [below right of=1] {k};

  		\path[every node/.style={font=\sffamily\small}]
    		(1) edge node [left] {$e_3$} (2)
		     edge node [right] {$e_1$} (3)
   		 (3) edge [bend right] node [right] {$e_2$} (1);
		\end{tikzpicture}

                \caption{Walk of Length 3 from Node $i$ to Node $j$: $i, e_1, k, e_2, i, e_3, j$}
                \label{fig:scenario1}
                
        \end{subfigure}%
        ~ %add desired spacing between images, e. g. ~, \quad, \qquad etc.
          %(or a blank line to force the subfigure onto a new line)  
          
        \begin{subfigure}[b]{\textwidth}
                \centering
      \begin{tikzpicture}[->, >=stealth',shorten >=1pt,auto,node distance= 3 cm,
  		thick,main node/.style={circle,draw,font=\sffamily\small}]

  		\node[main node] (1) {i};
  		\node[main node] (2) [below left of=1] {j};
  		\node[main node] (3) [below right of=1] {k};

  		\path[every node/.style={font=\sffamily\small}]
    		(1) edge node [left] {$e_1$} (2)
    		(2) edge [bend left] node  {$e_2$} (3)
    		(3) edge [bend left] node {$e_3$} (2);
		\end{tikzpicture}
      
                \caption{Walk of Length 3 from Node $i$ to Node $j$: $i, e_1, j, e_2, k, e_3, j$}
                \label{fig:scenario2}
        \end{subfigure}
        ~ %add desired spacing between images, e. g. ~, \quad, \qquad etc.
          %(or a blank line to force the subfigure onto a new line)
          
          \begin{subfigure}[b]{\textwidth}
                \centering
	\begin{tikzpicture}[->,>=stealth',shorten >=1pt,auto,node distance=3cm,
  		thick,main node/.style={circle,draw,font=\sffamily\small}]

  		\node[main node] (1) {i};
  		\node[main node] (2) [right of=1] {j};

\path[every node/.style={font=\sffamily\small}]
    		(1) edge node  [below] {$e_1$} (2)
   		 (2) edge [bend right] node {$e_2$} (1)
		 (1) edge [bend left] node  {$e_3$} (2);
		\end{tikzpicture}

                \caption{Walk of Length 3 from Node $i$ to Node $j$: $i, e_1, k, e_2, i, e_3, j$}
                \label{fig:scenario3}
        \end{subfigure}
   
        \caption{Walks of Length 3 that are not Paths of Length 3}\label{fig:notpath}
\end{figure*}
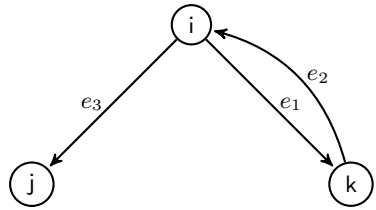
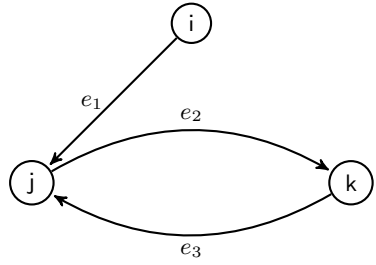
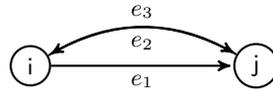

Therefore, the number of paths of length 3 from node $i$ to node $j \neq i$ is
{\scriptsize
\begin{equation}
\begin{split}
&\sum_{i = 1}^N \sum_{j > i} (\mathbf{A}^3)_{i,j} - (\mathbf{A}_{i,j})\sum_{k = 1, k \neq i,j}^N \mathbf{A}_{i,k}  - (\mathbf{A}_{i,j})\sum_{k = 1, k \neq i,j}^N \mathbf{A}_{j,k} -\mathbf{A}_{i,j}\\
= &\sum_{i = 1}^N \sum_{j > i} (\mathbf{A}^3)_{i,j} - (\mathbf{A}_{i,j})\left(\sum_{k = 1, k \neq i,j}^N \mathbf{A}_{i,k} + \sum_{k = 1, k \neq i,j}^N \mathbf{A}_{j,k}  + 1 \right)\\
=  &\sum_{i = 1}^N \sum_{j > i} (\mathbf{A}^3)_{i,j} - (\mathbf{A}_{i,j})\left(  (\mathbf{A}^2)_{i,i} + \sum_{k = 1, k \neq i,j}^N \mathbf{A}_{k,j}  \right).
\end{split}
\end{equation}
}

\noindent This leads to 
\begin{equation*}
\begin{split}
&g(E(\mathbf{A})) = \sum_{i = 1}^N \frac{(\mathbf{A}^3)_{i,i}}{6} + \\
&\sum_{i = 1}^N \sum_{j > i} (\mathbf{A}^3)_{i,j} - (\mathbf{A}_{i,j})\left(  (\mathbf{A}^2)_{i,i} + \sum_{k = 1, k \neq i,j}^N \mathbf{A}_{k,j}  \right).
\end{split}
\end{equation*}

%%%%%%%%%%%%%%%%%%%%%%%%%

\end{document}